\newcounter{desccount}
\newcommand{\descitem}[1]{%
  \item[#1] \refstepcounter{desccount}\label{#1}
}
\newcommand{\descref}[1]{\hyperref[#1]{#1}}
\newtheorem{defi}{Definition}[section]
\newtheorem{teo}{Theorem}[section]
\newtheorem{exem}{Example}[section]
\newtheorem{obs}{Remark}[section]
\newtheorem{prop}{Proposition}[section]
\newtheorem{lem}{Lemma}[section]
\newtheorem{coro}{Corollary}[section]
\begin{document}


\title{($S$,$N$,$T$)-Implications}

\author{Fernando Neres\\ fernandoneres@ufersa.edu.br \\ Departamento de Ci\^encia e Tecnologia - DCT \\
Universidade Federal Rural do Semi-\'Arido -- UFERSA\\
Cara\'ubas, Rio Grande do Norte, Brazil\\[0.4cm] Benjam\'in Bedregal\\bedregal@dimap.ufrn.br\\Departamento de  Inform\'atica e Matem\'atica Aplicada - DIMAp \\ Universidade Federal do Rio Grande do Norte - UFRN\\
Natal, Rio Grande do Norte, Brazil\\[0.4cm] Regivan H. N. Santiago\\regivan@dimap.ufrn.br\\Departamento de  Inform\'atica e Matem\'atica Aplicada - DIMAp \\ Universidade Federal do Rio Grande do Norte - UFRN\\
Natal, Rio Grande do Norte, Brazil}

%
%
%
%
%
%

\maketitle

\begin{abstract}
In this paper we introduce a new class of fuzzy implications called ($S$,$N$,$T$)-implications inspired in the logical equivalence $p\rightarrow q \equiv \neg(p\wedge\neg q)\vee\neg p$ and present a brief study of some of the main properties that characterize this class. We  present methods of obtaining $t$-norms and $t$-conorms from an ($S$,$N$,$T$)-implication and a fuzzy negation.
\end{abstract}

%

\paragraph{Keywords:} ($S$,$N$,$T$)-implication, fuzzy implication, ($S$,$N$)-implication, ($T$,$N$)-implication, fuzzy negation, $t$-norm, $t$-conorm, fuzzy logic, exchange principle, fuzzy theory.

\section{Introduction}

Just as the connective $t$-norm and $t$-conorm of fuzzy logic are generalizations of the connective conjunction ($\wedge$) and disjunction ($\vee$) of classical logic, respectively; fuzzy implication \cite{Baczynski2008} is an important connective of fuzzy logic that generalizes the connective implication of classical logic, taking truth values in the interval $[0,1]$ instead of in the set $\{0,1\} $.

Different classes of fuzzy implications have their origin from logical equivalences of the classical implication, e.g., the ($S$,$N$)-implications \cite{Baczynski2007} that generalize the material implication $p\rightarrow q\equiv\neg p\vee q$, the $QL$-implications \cite{Dimuro2017,Mas2006} that generalize the implication defined in quantum logic $p\rightarrow q\equiv\neg p\vee(p\wedge q)$ and the $D$-implications \cite{Dimuro2019a,Mas2006} that generalize the Dishkant implication $p\rightarrow q\equiv q\vee(\neg p\wedge\neg q)$ of orthomodular lattices.

In this paper, we will introduce a new class of fuzzy implications called ($S$,$N$,$T$)-implications that generalizes the logical equivalence $$p\rightarrow q \equiv \neg(p\wedge\neg q)\vee\neg p$$ where $\wedge$ is replaced by a $t$-norm, $\vee$ by a $t$-conorm and $\neg$ by a fuzzy negation. We will present a brief study of some of the main properties that characterize this new class.  Finally, we will present two methods to obtain $t$-norms and $t$-conorms from an ($S$,$N$,$T$)-implication and a fuzzy negation.

This paper is organized as follows: the Section 2 contains some basic definitions and results that will be useful in subsequent sections;  in Section 3 we present the definition of ($S$,$N$,$T$)-implication, some of the main properties commonly associated with fuzzy implications are discussed for this new class of implications, in addition,  we present two methods to get $t$-norms and $t$-conorms from an ($S$,$N$,$T$)-implication and a fuzzy negation. And finally, in Section 4 are presented the conclusions and some proposals for future investigations.

\section{Preliminary Concepts}

In this section we will present some definitions and results already known in the literature, which will be useful for the development of this paper.

\begin{defi}\label{def2.1}(\cite{Baczynski2008,Klement2000})
A function $T:[0,1]^{2}\rightarrow[0,1]$ is said to be a \emph{triangular norm} (\emph{$t$-norm} for short) if it satisfies, for all $x,y,z,w\in[0,1]$, the following conditions:
\begin{enumerate}[label={\textbf{(T\arabic*)}}, ref=\textbf{(T\arabic*)}, align=left, leftmargin=*, noitemsep]
\item \label{itmT1} Commutativity: $T(x,y)=T(y,x)$;
\item \label{itmT2} Associativity: $T(x,T(y,z))=T(T(x,y),z)$;
\item \label{itmT3} Monotonicity: If $x\leq z$ and $y\leq w$ then $T(x,y)\leq T(z,w)$;
\item \label{itmT4} Boundary condition: $T(x,1)=x$.
\end{enumerate}
\end{defi}

\begin{obs}\label{rmk2.1}(\cite{Beliakov2007,Klement2000})
We can deduce from Definition \ref{def2.1} that each $t$-norm $T$ satisfies the following conditions:
\begin{enumerate}[label={\textbf{(T\arabic*)}}, ref=\textbf{(T\arabic*)}, align=left, leftmargin=*, noitemsep]
\setcounter{enumi}{4}
\item \label{itmT5} $T(0,x)=T(x,0)=0, \ \forall x\in[0,1]$;
\item \label{itmT6} $T(1,x)=x, \ \forall x\in[0,1]$;
\item \label{itmT7} $T(x,y)\leq x, \ \forall x,y\in[0,1]$;
\item \label{itmT8} $T(x,y)\leq y, \ \forall x,y\in[0,1]$.
\end{enumerate}
\end{obs}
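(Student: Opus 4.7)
The plan is to derive the four conclusions of Remark~\ref{rmk2.1} in the order \ref{itmT6}, \ref{itmT5}, \ref{itmT7}, \ref{itmT8}, since each subsequent item reuses the previous ones together with the axioms \ref{itmT1}--\ref{itmT4}. This is a routine exercise in manipulating the $t$-norm axioms; the only subtlety is being careful about which axiom provides the absorbing value $0$, since \ref{itmT4} only directly gives $T(x,1)=x$.

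First I would dispatch \ref{itmT6}: by commutativity \ref{itmT1} and the boundary condition \ref{itmT4}, we have $T(1,x)=T(x,1)=x$. Next, for \ref{itmT5} I would exploit monotonicity to bound $T(x,0)$ from above. Using \ref{itmT3} with $x\leq 1$ and $0\leq 0$, we obtain $T(x,0)\leq T(1,0)$; by \ref{itmT1} and \ref{itmT4}, $T(1,0)=T(0,1)=0$, so $T(x,0)\leq 0$, and since the codomain is $[0,1]$ we conclude $T(x,0)=0$. Commutativity then gives $T(0,x)=0$ as well.

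Finally, \ref{itmT7} follows from \ref{itmT3} and \ref{itmT4}: since $y\leq 1$, we have $T(x,y)\leq T(x,1)=x$. Condition \ref{itmT8} is then immediate by combining \ref{itmT1} with \ref{itmT7}, or equivalently by applying \ref{itmT3} and \ref{itmT4} again: $T(x,y)=T(y,x)\leq T(y,1)=y$.

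The main (and only) obstacle is purely notational: keeping track of the commutativity swap so that the boundary condition \ref{itmT4}, which is stated only on the right-hand argument, can be used on either side. There is no creative step involved; everything is a two- or three-line chain of (in)equalities from the axioms listed in Definition~\ref{def2.1}.
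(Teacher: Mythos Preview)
Your argument is correct and is exactly the standard derivation of \ref{itmT5}--\ref{itmT8} from the axioms \ref{itmT1}--\ref{itmT4}. The paper itself does not supply a proof of Remark~\ref{rmk2.1}; it merely records these facts with citations to \cite{Beliakov2007,Klement2000}, so there is no alternative approach to compare against.
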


\begin{exem}\label{ex2.1}
The four basic t-norms are $T_{min}(x,y)=\min\{x,y\}$ (\emph{minimum}), $T_{P}(x,y)=xy$ (\emph{product}), $T_{L}(x,y)=\max\{x+y-1,0\}$ (\emph{{\L}ukasiewicz $t$-norm}) and $T_{D}(x,y) = \begin{cases} 0 &\mbox{if } x,y\in[0,1[ \\
\min\{x,y\} & \mbox{otherwise. } \end{cases}$ (\emph{drastic product}).
\end{exem}

\begin{defi}\label{def2.2}(\cite{Baczynski2008,Beliakov2007})
A $t$-norm $T$ is called
\begin{enumerate}[label={\textbf{(\roman*)}}, ref=\textbf{(S\arabic*)}, align=left, leftmargin=*, noitemsep]
\item \emph{continuous} if it is continuous in both the arguments;
\item \emph{idempotent}, if $T(x,x)=x$ for all $x\in[0, 1]$;
\item \emph{strict}, if it is continuous and strictly monotone, i.e., $T(x,y)<T(x,z)$
whenever $x>0$ and $y<z$;
\item \emph{positive}, if $T(x,y)=0$ then either $x=0$ or $y=0$.
\end{enumerate}
\end{defi}

\begin{defi}\label{def2.3}(\cite{Baczynski2008,Klement2000})
A function $S:[0,1]^{2}\rightarrow[0,1]$ is said to be a \emph{triangular conorm} (\emph{t-conorm} for short) if it satisfies, for all $x,y,z,w\in[0,1]$, the following conditions:
\begin{enumerate}[label={\textbf{(S\arabic*)}}, ref=\textbf{(S\arabic*)}, align=left, leftmargin=*, noitemsep]
\item \label{itmS1} Commutativity: $S(x,y)=S(y,x)$;
\item \label{itmS2} Associativity: $S(x,S(y,z))=S(S(x,y),z)$;
\item \label{itmS3} Monotonicity: If $x\leq z$ and $y\leq w$ then $S(x,y)\leq S(z,w)$;
\item \label{itmS4} Boundary condition: $S(x,0)=x$.
\end{enumerate}
\end{defi}

\begin{obs}\label{rmk2.2}(\cite{Beliakov2007,Klement2000})
We can deduce from Definition \ref{def2.3} that each $t$-conorm $S$ satisfies the following conditions:
\begin{enumerate}[label={\textbf{(S\arabic*)}}, ref=\textbf{(S\arabic*)}, align=left, leftmargin=*, noitemsep]
\setcounter{enumi}{4}
\item \label{itmS5} $S(1,x)=S(x,1)=1, \ \forall x\in[0,1]$;
\item \label{itmS6} $S(0,x)=x, \ \forall x\in[0,1]$;
\item \label{itmS7} $S(x,y)\geq x, \ \forall x,y\in[0,1]$;
\item \label{itmS8} $S(x,y)\geq y, \ \forall x,y\in[0,1]$.
\end{enumerate}
\end{obs}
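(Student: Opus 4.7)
The statement to prove is Remark 2.2: the four derived properties (S5)--(S8) of a $t$-conorm follow from the defining axioms (S1)--(S4). My plan is to handle them in an order that lets each property be reused by the next, rather than attacking them independently.

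I would start with (S6), since it is the most immediate: commutativity (S1) rewrites $S(0,x)$ as $S(x,0)$, and the boundary condition (S4) gives $S(x,0)=x$. Next I would derive (S5) by exploiting monotonicity together with the only concrete value produced by (S4) at the top, namely $S(1,0)=1$. Indeed, for any $x\in[0,1]$ we have $0\le x\le 1$, so (S3) yields
\[
1 \;=\; S(1,0)\;\le\; S(1,x)\;\le\; S(1,1)\;\le\; 1,
\]
forcing $S(1,x)=1$; commutativity (S1) then gives $S(x,1)=1$ as well. Note that the upper bound $S(1,1)\le 1$ is automatic because $S$ maps into $[0,1]$, so no circularity is involved.

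For (S7), I would apply monotonicity to the inequality $0\le y$: by (S3), $S(x,0)\le S(x,y)$, and (S4) identifies the left-hand side with $x$, giving $x\le S(x,y)$. Finally, (S8) is (S7) after a single application of commutativity: $S(x,y)=S(y,x)\ge y$.

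There is no real obstacle here; the only point that deserves care is making sure each step uses only axioms already invoked (so that the argument is genuinely a deduction from (S1)--(S4) and not from something smuggled in from the $t$-norm side). In particular, the derivation of (S5) must avoid assuming (S7)/(S8) in advance, which is why I place it immediately after (S6) and derive it directly from monotonicity together with the codomain constraint $S(1,1)\in[0,1]$.
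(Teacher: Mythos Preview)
Your proof is correct and entirely standard. Note, however, that the paper does not supply its own proof of this remark: it is stated with citations to \cite{Beliakov2007,Klement2000} and left as a routine observation, so there is no argument in the paper to compare against. Your derivation from (S1)--(S4) is exactly the kind of verification those references contain, and the order you chose (first (S6), then (S5) via monotonicity and the codomain constraint, then (S7) and (S8)) is clean and avoids any circularity.
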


\begin{exem}\label{ex2.2}
The four basic $t$-conorms are $S_{max}(x,y)=\max\{x,y\}$ (\emph{maximum}), $S_{P}(x,y)=x+y-xy$ (\emph{probabilistic sum}), $S_{L}(x,y)=\min\{x+y,1\}$ (\emph{{\L}ukasiewicz $t$-conorm}) and $S_{D}(x,y)=\begin{cases} 1 &\mbox{if } x,y\in]0,1] \\
\max\{x,y\} & \mbox{otherwise. } \end{cases}$ (\emph{drastic sum}).
\end{exem}

\begin{defi}\label{def2.4}(\cite{Baczynski2008,Beliakov2007})
A $t$-conorm $S$ is called
\begin{enumerate}[label={\textbf{(\roman*)}}, ref=\textbf{(S\arabic*)}, align=left, leftmargin=*, noitemsep]
\item \emph{continuous} if it is continuous in both the arguments;
\item \emph{idempotent}, if $S(x,x)=x$ for all $x\in[0, 1]$;
\item \emph{strict}, if it is continuous and strictly monotone, i.e., $S(x,y)<S(x,z)$
whenever $x<1$ and $y<z$;
\item \emph{positive}, if $S(x,y)=1$ then either $x=1$ or $y=1$.
\end{enumerate}
\end{defi}

\begin{prop}\label{prop2.1}(\cite{Klement2000})
A function $S:[0,1]^{2}\rightarrow[0,1]$ is a t-conorm if and only if there exists a t-norm $T$ such that for all $x,y\in[0,1]$, $$S(x,y)=1-T(1-x,1-y).$$
\end{prop}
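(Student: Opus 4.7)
The plan is to prove both directions by exploiting the involutive nature of the standard negation $N(x)=1-x$, under which the axioms \ref{itmT1}--\ref{itmT4} get swapped with \ref{itmS1}--\ref{itmS4}.

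For the $(\Leftarrow)$ direction, I would fix a $t$-norm $T$ and define $S(x,y) := 1 - T(1-x,1-y)$, then verify the four t-conorm axioms in turn. Commutativity \ref{itmS1} falls out immediately from \ref{itmT1}; associativity \ref{itmS2} follows by writing out both sides and twice replacing an inner expression $1-T(\cdot,\cdot)$ by noting $1-(1-T(a,b)) = T(a,b)$, which reduces the identity to \ref{itmT2}. For monotonicity \ref{itmS3}, I would observe that $u\mapsto 1-u$ is order-reversing on $[0,1]$, so $x\le z$ and $y\le w$ imply $1-x\ge 1-z$ and $1-y\ge 1-w$; by \ref{itmT3} we get $T(1-x,1-y)\ge T(1-z,1-w)$, and subtracting from $1$ flips the inequality back. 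The boundary condition \ref{itmS4} is the short computation $S(x,0) = 1 - T(1-x,1) = 1-(1-x) = x$ using \ref{itmT4}.

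For the $(\Rightarrow)$ direction, given any $t$-conorm $S$, I would define the candidate $T(x,y) := 1 - S(1-x,1-y)$ and verify \ref{itmT1}--\ref{itmT4} by the exact mirror of the above argument (using the t-conorm properties, including \ref{itmS4} for the boundary). Once $T$ is confirmed to be a $t$-norm, recovering the desired representation for $S$ is a one-line substitution: $1 - T(1-x,1-y) = 1 - \bigl(1 - S(1-(1-x), 1-(1-y))\bigr) = S(x,y)$, which works precisely because $N$ is an involution.

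I do not anticipate any real obstacle; the proof is essentially a bookkeeping exercise in which each t-norm axiom is transported to its t-conorm counterpart by conjugating with $N$. The only point that deserves care is to write out the associativity step cleanly, since the nested occurrences of $1 - (\cdot)$ can be notationally dense; I would handle it by introducing abbreviations $x' = 1-x$, $y'=1-y$, $z'=1-z$ so that the verification reduces transparently to $T(x', T(y', z')) = T(T(x', y'), z')$.
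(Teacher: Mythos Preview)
Your proposal is correct and is the standard argument for this classical duality. Note that the paper does not actually supply its own proof of this proposition---it is quoted as a preliminary result from \cite{Klement2000}---so there is nothing to compare against; your write-up would serve perfectly well as a self-contained justification.
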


\begin{defi}\label{def2.5}(\cite{Baczynski2008})
A function $N:[0,1]\rightarrow[0,1]$ is said to be a \emph{fuzzy negation} if the following conditions hold:

\begin{enumerate}[label={\textbf{(N\arabic*)}}, ref=\textbf{(N\arabic*)}, align=left, leftmargin=*, noitemsep]

\item \label{itmN1} $N$ satisfies the boundary conditions: $N(0)=1$ and $N(1)=0$;

\item \label{itmN2} $N$ is non-increasing: if $x\leq y$ then $N(y)\leq N(x)$.
\end{enumerate}

Some extra properties for fuzzy negation are:


\begin{enumerate}[label={\textbf{(N\arabic*)}}, ref=\textbf{(N\arabic*)}, align=left, leftmargin=*, noitemsep]

\setcounter{enumi}{2}

\item \label{itmN3} $N$ is strictly decreasing;

\item \label{itmN4} $N$ is continuous;

%
%
\item \label{itmN5} $\forall x\in[0,1] : N(N(x))=x$;


\item \label{itmN6} $N(x)=1$ if and only if $x=0$;


\item \label{itmN7} $\forall x\in[0,1] : N(x)\in\{0,1\}$.
\end{enumerate}
A fuzzy negation $N$ is called of \textbf{strict}, \textbf{strong}, \textbf{non-filling} and \textbf{crisp} if satisfy \ref{itmN3} and  \ref{itmN4}, \ref{itmN5}, \ref{itmN6} and \ref{itmN7}, respectively.
\end{defi}

\begin{obs}\label{rmk2.10}(\cite{Dimuro2017})
A fuzzy negation $N:[0,1]\rightarrow[0,1]$ is crisp if and only if there exists $\alpha\in[0,1[$ such that $N=N_{\alpha}$ or there exists $\alpha\in]0,1]$ such that $N=N^{\alpha}$, where
\begin{equation}
N_{\alpha}(x) = \begin{cases} 0 &\mbox{if } x>\alpha \\
1 & \mbox{if } x\leq\alpha \end{cases} \quad \text{and} \quad N^{\alpha}(x) = \begin{cases} 0 &\mbox{if } x\geq\alpha \\
1 & \mbox{if } x<\alpha. \end{cases}
\end{equation}
\end{obs}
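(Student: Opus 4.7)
The plan is to handle the two implications of the biconditional separately, using monotonicity to reduce the description of a crisp negation to the analysis of a single threshold.

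For the forward direction, I would assume $N$ is crisp, i.e.\ satisfies \ref{itmN7}, and consider the set $A := \{x \in [0,1] : N(x) = 1\}$. By \ref{itmN1}, $A$ contains $0$ and does not contain $1$, so $A$ is nonempty and bounded. By \ref{itmN2}, if $x \in A$ and $y \le x$, then $N(y) \ge N(x) = 1$, so $N(y) = 1$ and $y \in A$; hence $A$ is an initial segment of $[0,1]$. Setting $\alpha := \sup A$, the set $A$ is forced to be either $[0,\alpha]$ or $[0,\alpha)$. I would then split into two cases: if $\alpha \in A$, then $A = [0,\alpha]$ and, since $1 \notin A$, necessarily $\alpha < 1$, so $N(x)=1$ for $x\le\alpha$ and $N(x)=0$ otherwise, i.e.\ $N = N_\alpha$ with $\alpha \in [0,1[$. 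If $\alpha \notin A$, then $A = [0,\alpha)$ and, since $0 \in A$, necessarily $\alpha > 0$, so $N(x)=1$ for $x<\alpha$ and $N(x)=0$ otherwise, i.e.\ $N = N^\alpha$ with $\alpha \in ]0,1]$.

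For the converse direction, I would simply verify that each $N_\alpha$ with $\alpha \in [0,1[$ and each $N^\alpha$ with $\alpha \in ]0,1]$ is a crisp fuzzy negation. The boundary conditions \ref{itmN1} are a direct check: $N_\alpha(0)=1$ because $0\le\alpha$, and $N_\alpha(1)=0$ because $\alpha<1$; similarly for $N^\alpha$ using $\alpha>0$ and $\alpha\le 1$. Monotonicity \ref{itmN2} is immediate from the two-valued step structure: both functions are equal to $1$ on an initial segment and to $0$ on the complementary final segment. Property \ref{itmN7} is built into the definitions.

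The only subtlety, and the step that deserves the most care, is the bookkeeping at the endpoints of the admissible intervals for $\alpha$. The cases $\alpha = 0$ (for $N_\alpha$) and $\alpha = 1$ (for $N^\alpha$) must be included, while $\alpha = 1$ must be excluded for $N_\alpha$ (otherwise $N(1)=1$, violating \ref{itmN1}) and $\alpha = 0$ must be excluded for $N^\alpha$ (otherwise $N(0)=0$). These exclusions/inclusions are exactly what the case split on whether $\alpha \in A$ produces, and they match the intervals $[0,1[$ and $]0,1]$ in the statement, so no further argument is needed beyond a careful delimitation of the ranges of $\alpha$.
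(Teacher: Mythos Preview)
Your argument is correct and complete. The paper does not actually prove this statement: it is recorded as a remark and attributed to \cite{Dimuro2017}, so there is no in-paper proof to compare against. Your approach---defining $A=\{x:N(x)=1\}$, using \ref{itmN1} and \ref{itmN2} to see that $A$ is a nonempty proper initial segment of $[0,1]$, setting $\alpha=\sup A$, and splitting on whether $\alpha\in A$---is the natural one and handles the endpoint constraints on $\alpha$ cleanly. The converse verification is routine, as you note.
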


\begin{obs}
$N_{0}=N_{\bot}$ and $N^{1}=N_{\top}$.
\end{obs}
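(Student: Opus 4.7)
The plan is to verify each of the two equalities pointwise by specializing the crisp negations of Remark \ref{rmk2.10} to the extremal values of the parameter $\alpha$, and then comparing with the standard descriptions of the least and greatest fuzzy negations, namely
$$N_{\bot}(x)=\begin{cases}1 & \text{if } x=0,\\ 0 & \text{if } x>0,\end{cases}\qquad N_{\top}(x)=\begin{cases}0 & \text{if } x=1,\\ 1 & \text{if } x<1.\end{cases}$$
Both boundary values of $\alpha$ are admissible in the setting of Remark \ref{rmk2.10}, since $0\in[0,1[$ and $1\in]0,1]$, so the defining formulas do apply.

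For the first identity, I would set $\alpha=0$ in the formula for $N_{\alpha}$. The branch ``$x>\alpha$'' becomes ``$x>0$'' and gives $N_{0}(x)=0$, while the branch ``$x\leq\alpha$'' becomes ``$x\leq 0$'', which on $[0,1]$ collapses to the single point $x=0$ and gives $N_{0}(0)=1$. Matching these two cases against the formula above shows $N_{0}=N_{\bot}$.

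For the second identity, I would set $\alpha=1$ in the formula for $N^{\alpha}$. The branch ``$x\geq\alpha$'' becomes ``$x\geq 1$'', which on $[0,1]$ collapses to $x=1$ and yields $N^{1}(1)=0$, while the branch ``$x<\alpha$'' becomes ``$x<1$'' and yields $N^{1}(x)=1$. Comparing with the definition of $N_{\top}$ above yields $N^{1}=N_{\top}$.

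There is essentially no obstacle: the argument is pure substitution, and the result is really just the observation that pushing the threshold $\alpha$ to the endpoints of $[0,1]$ degenerates the parametric family of crisp negations into the two extremal ones. The only care needed is to check that the equality branches ($x\leq 0$ and $x\geq 1$) are nonempty on $[0,1]$, which is why they force $x=0$ and $x=1$ respectively.
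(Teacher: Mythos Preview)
Your argument is correct: specializing $\alpha=0$ in $N_{\alpha}$ and $\alpha=1$ in $N^{\alpha}$ and comparing pointwise with the standard least and greatest fuzzy negations is exactly the right (and essentially only) verification. The paper itself offers no proof for this remark, treating it as an immediate observation, so your write-up simply makes explicit what the authors leave to the reader.
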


\begin{defi}\label{def2.6}(Definition 2.3.14 in \cite{Baczynski2008})
Let $T$ be a $t$-norm and $N$ a fuzzy negation. We say that the
pair ($T$,$N$) satisfies the \emph{law of contradiction} if and only if for all $x\in[0,1]$,
\begin{equation*}\label{LC}
T(N(x),x)=0 \tag{\textbf{LC}}.
\end{equation*}
\end{defi}

\begin{defi}\label{def2.7}(Definition 1.54 in \cite{Beliakov2007})
Let $T$ be a $t$-norm, $S$ a $t$-conorm and $N$ a strong fuzzy negation.  $T$ is \emph{$N$-dual to $S$} if and only if for all $x,y\in[0,1]$,
\begin{equation*}\label{T-(N-D)-S}
N(T(x,y))=S(N(x),N(y)) \tag{\textbf{T-(N-D)-S}}.
\end{equation*}
Analogously,  $S$ is \emph{$N$-dual to $T$} if and only if for all $x,y\in[0,1]$,
\begin{equation*}\label{S-(N-D)-T}
N(S(x,y))=T(N(x),N(y)) \tag{\textbf{S-(N-D)-T}}.
\end{equation*}
\end{defi}

\begin{defi}\label{def2.8}(\cite{Baczynski2008})
Let $T$ be a $t$-norm, $S$ a $t$-conorm and $N$ a strict negation. ($T$,$S$,$N$) is called a \emph{De Morgan triple} if they satisfy the following equations:
\begin{enumerate}[label={\textbf{(\roman*)}}, ref=\textbf{(\roman*)}, align=left, leftmargin=*, noitemsep]
\item $N(S(x,y))=T(N(x),N(y)), \forall x,y\in[0,1]$;
\item $N(T(x,y))=S(N(x),N(y)), \forall x,y\in[0,1]$.
\end{enumerate}
\end{defi}

\begin{teo}\label{thm2.1}(\cite{Baczynski2008})
Let $T$ be a $t$-norm, $S$ a $t$-conorm and $N$ a strict negation.  Then, ($T$,$S$,$N$) is a De Morgan triple if and only if $N$ is a strong negation and $S$ is $N$-dual to $T$.
\end{teo}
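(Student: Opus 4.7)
The plan is to treat the two directions separately, with the key observation that identity (i) of Definition~\ref{def2.8}, namely $N(S(x,y))=T(N(x),N(y))$, is literally the defining equation for ``$S$ is $N$-dual to $T$'' of Definition~\ref{def2.7}. Hence the genuine content on each side of the equivalence lies in the interplay between the two De Morgan identities and the involutivity of $N$ (property \ref{itmN5}).

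For the $(\Leftarrow)$ direction, I assume $N$ is strong and $S$ is $N$-dual to $T$. Identity (i) is then the hypothesis itself. To derive (ii), I would substitute $(N(x),N(y))$ for $(x,y)$ in (i) to obtain $N(S(N(x),N(y)))=T(N(N(x)),N(N(y)))$, and collapse the two inner compositions via \ref{itmN5} to get $N(S(N(x),N(y)))=T(x,y)$. Applying $N$ to both sides and using \ref{itmN5} once more on the outer $N$ then yields $S(N(x),N(y))=N(T(x,y))$, which is precisely identity (ii), so $(T,S,N)$ is a De Morgan triple.

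For the $(\Rightarrow)$ direction, I assume (i) and (ii) both hold with $N$ strict. As noted, (i) already exhibits $S$ as the $N$-dual of $T$, so the remaining task is to establish $N\circ N=\mathrm{id}_{[0,1]}$. I would apply $N$ to both sides of (ii) to get $N(N(T(x,y)))=N(S(N(x),N(y)))$, and then rewrite the right-hand side via (i) evaluated at $(N(x),N(y))$, obtaining $N^{2}(T(x,y))=T(N^{2}(x),N^{2}(y))$; a symmetric computation starting from (i) and using (ii) at $(N(x),N(y))$ yields $N^{2}(S(x,y))=S(N^{2}(x),N^{2}(y))$. Because $N$ is strict, the map $g:=N\circ N$ is continuous, strictly increasing, with $g(0)=0$ and $g(1)=1$, so it is an order automorphism of $[0,1]$ that simultaneously commutes with $T$ and $S$. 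The main obstacle is to upgrade these joint homomorphism identities to the pointwise conclusion $g=\mathrm{id}_{[0,1]}$; I expect this step to be the crux of the proof, and to rely on the boundary conditions \ref{itmT4} and \ref{itmS4}, the existence of the unique fixed point of the strict negation $N$ (which $g$ must also fix), and the rigidity imposed by having $g$ commute with both a $t$-norm and its $N$-dual $t$-conorm at once.
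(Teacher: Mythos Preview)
The paper does not prove Theorem~\ref{thm2.1}; it merely cites it from \cite{Baczynski2008}, so there is no in-paper argument to compare your proposal against.

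On the merits: your $(\Leftarrow)$ direction is correct and complete. Your $(\Rightarrow)$ direction correctly reduces matters to showing that $g:=N\circ N$ is the identity, and correctly derives that $g$ commutes with both $T$ and $S$; but you then stop, hoping that simultaneous commutation with $T$ and $S$ will force $g=\mathrm{id}_{[0,1]}$. That hope cannot be fulfilled, and in fact the forward implication is \emph{false} as the statement is recorded in this paper. Take $T=T_{min}$, $S=S_{max}$, and $N(x)=1-x^{2}$. Then $N$ is strict, and since every strictly decreasing map satisfies $N(\max(x,y))=\min(N(x),N(y))$ and $N(\min(x,y))=\max(N(x),N(y))$, both identities (i) and (ii) of Definition~\ref{def2.8} hold; yet $N(N(1/2))=1-(3/4)^{2}=7/16\neq 1/2$, so $N$ is not strong. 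Your two homomorphism identities give no leverage in this example because \emph{every} increasing bijection of $[0,1]$ commutes with $\min$ and $\max$. More generally, once $g$ commutes with $T$ and with $N$ (the latter is automatic since $g=N^{2}$), commutation with the $S$ determined by (i) follows, so the second identity never contributes an independent constraint.

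The likely explanation is a slight mis-transcription from the source: in standard references a De Morgan triple is defined with a \emph{strong} negation from the outset (as Definition~\ref{def2.7} already presupposes for $N$-duality), so strongness on the forward side is part of the hypothesis, and the genuine content of the equivalence is that, for strong $N$, either one of (i), (ii) implies the other --- exactly what your $(\Leftarrow)$ computation establishes.
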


\begin{defi}\label{def2.9}(\cite{Baczynski2013,Baczynski2008})
A function $I:[0,1]^{2}\rightarrow[0,1]$ is said to be a \emph{fuzzy implication} if it satisfies the following conditions:
\begin{enumerate}[label={\textbf{(I\arabic*)}}, ref=\textbf{(I\arabic*)}, align=left, leftmargin=*, noitemsep]
\item \label{itmI1} $I(x,z)\geq I(y,z)$ whenever $x\leq y$ and $z\in[0,1]$; 
\item \label{itmI2} $I(x,y)\leq I(x,z)$ whenever $y\leq z$ and $x\in[0,1]$; 
\item \label{itmI3} $I(0,0)=1$;
\item \label{itmI4} $I(1,1)=1$;
\item \label{itmI5} $I(1,0)=0$.
\end{enumerate}
\end{defi}

We can deduce directly from Definition \ref{def2.9} that each fuzzy implication $I$ satisfies the following properties:
\begin{enumerate}[label={\textbf{(I\arabic*)}}, ref=\textbf{(I\arabic*)}, align=left, leftmargin=*, noitemsep]
\setcounter{enumi}{5}
\item \label{itmI6} $I(0,y)=1, \ \forall y\in[0,1]$. (left boundary condition)
\item \label{itmI7} $I(x,1)=1, \ \forall x\in[0,1]$. (right boundary condition)
\item \label{itmI8} $I(x,y)\geq I(x,0), \ \forall x,y\in[0,1]$.
\item \label{itmI9} $I(x,y)\geq I(1,y), \ \forall x,y\in[0,1]$.
\end{enumerate}

The set of all fuzzy implications will be denoted by $\mathcal{FI}$. There exist several properties that may be required for fuzzy implications \cite{Baczynski2008,Dimuro2015,Dimuro2019,Dimuro2014}. In what follows, we present some of these properties, which will be used to characterize the class of fuzzy implications proposed in this paper.

\begin{defi}\label{def2.10}
A fuzzy implication $I:[0,1]^{2}\rightarrow[0,1]$ satisfies:

\begin{description}
\descitem{(NP)} if $\forall y\in[0,1] : I(1,y)=y$;
\descitem{(EP)} if $\forall x,y,z\in[0,1] : I(x,I(y,z))=I(y,I(x,z))$;
\descitem{(IP)} if  $\forall x\in[0,1] : I(x,x)=1$;
\descitem{(LOP)} if $\forall x,y\in[0,1] : x\leq y \implies I(x,y)=1$;
\descitem{(ROP)} if $\forall x,y\in[0,1] : I(x,y)=1 \implies x\leq y$;
\descitem{(CB)} if $\forall x,y\in[0,1] : y\leq I(x,y)$;
\descitem{(SIB)} if $\forall x,y\in[0,1] : I(x,I(x,y))\geq I(x,y)$;
\descitem{(IB)} if $\forall x,y\in[0,1] : I(x,I(x,y))=I(x,y)$.
\end{description}
\end{defi}

\begin{defi}\label{def2.11}(\cite{Baczynski2008})
Let $I$ be a fuzzy implication and $N$ be a fuzzy negation. $I$ satisfies
\begin{description}
\descitem{(CP)} if $\forall x,y\in[0,1] : I(x,y)=I(N(y),N(x))$;
\descitem{(L-CP)} if $\forall x,y\in[0,1] : I(N(x),y)=I(N(y),x)$;
\descitem{(R-CP)} if $\forall x,y\in[0,1] : I(x,N(y))=I(y,N(x))$.
\end{description}
\end{defi}

If $I$ satisfies \textbf{\descref{(CP)}} (\textbf{\descref{(L-CP)}}, \textbf{\descref{(R-CP)}}) with respect to a specific $N$, then we will denote this by \textbf{\descref{(CP)}}($N$) (respectively, by \textbf{\descref{(L-CP)}}($N$), \textbf{\descref{(R-CP)}}($N$)).

\begin{defi}\label{def2.12}(\cite{Baczynski2008,Baczynski2013})
Let $I\in\mathcal{FI}$. The function $N_{I}$ defined by 
\begin{equation}\label{eq1}
 N_{I}(x):=I(x,0),
\end{equation}
is called the
\emph{natural negation of $I$} or the \emph{negation induced by $I$}.
\end{defi}

\begin{defi}\label{def2.13}(\cite{Baczynski2007,Baczynski2008})
A function $I:[0,1]^{2}\rightarrow[0,1]$ is called an ($S$,$N$)-implication if there exist a $t$-conorm $S$ and a fuzzy negation $N$ such that
\begin{equation}\label{eq2}
I(x,y)=S(N(x),y)
\end{equation}
for all $x,y\in[0,1]$.  We will write $I_{S,N}$ to denote an ($S$,$N$)-implication.
\end{defi}

\begin{prop}\label{prop2.2}(\cite{Baczynski2007,Baczynski2008})
If $I_{S,N}$ is an ($S$,$N$)-implication, then
\begin{enumerate}[label={\textbf{(\roman*)}}, ref=\textbf{(\roman*)}, align=left, leftmargin=*, noitemsep]
\item $I_{S,N}\in\mathcal{FI}$ and $I_{S,N}$ satisfies \textbf{\descref{(NP)}} and \textbf{\descref{(EP)}};
\item $N_{I_{S,N}}=N$;
\item $I_{S,N}$ satisfies \textbf{\descref{(R-CP)}}($N$);
\item If $N$ is strict then $I_{S,N}$ satisfies \textbf{\descref{(L-CP)}}($N^{-1}$);
\item If $N$ is strong then $I_{S,N}$ satisfies \textbf{\descref{(CP)}}($N$).
\end{enumerate}
\end{prop}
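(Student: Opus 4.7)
The plan is to verify the five items in order by direct computation from the defining equation $I_{S,N}(x,y) = S(N(x),y)$, using only the $t$-conorm axioms of Definition~\ref{def2.3} and Remark~\ref{rmk2.2} together with the fuzzy negation axioms of Definition~\ref{def2.5}.

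For item (i) I would check each of \ref{itmI1}--\ref{itmI5}, \descref{(NP)}, and \descref{(EP)} one by one. Monotonicity in the first argument follows from composing \ref{itmN2} with \ref{itmS3}; monotonicity in the second argument is \ref{itmS3} alone. The three corner conditions collapse to computing $S(1,0)$, $S(0,1)$, $S(0,0)$ via \ref{itmS4}--\ref{itmS6} combined with the boundary values $N(0)=1$ and $N(1)=0$ from \ref{itmN1}. Property \descref{(NP)} is just $I_{S,N}(1,y) = S(0,y) = y$. The only chain of equalities actually worth writing out is \descref{(EP)}: apply \ref{itmS2} to strip one pair of parentheses, \ref{itmS1} to swap $N(x)$ with $N(y)$, then \ref{itmS2} again to reassemble, yielding $S(N(x),S(N(y),z)) = S(N(y),S(N(x),z))$.

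Items (ii), (iii), and (v) reduce to one-line substitutions. For (ii), $N_{I_{S,N}}(x) = I_{S,N}(x,0) = S(N(x),0) = N(x)$ by \ref{itmS4}. For (iii), \descref{(R-CP)}$(N)$ unfolds to $S(N(x),N(y)) = S(N(y),N(x))$, which is immediate from \ref{itmS1}. For (v), assuming $N$ is strong, \ref{itmN5} gives $I_{S,N}(N(y),N(x)) = S(N(N(y)),N(x)) = S(y,N(x)) = S(N(x),y) = I_{S,N}(x,y)$, establishing \descref{(CP)}$(N)$.

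Item (iv) is the one place where a preliminary remark is needed: a strict negation is continuous and strictly decreasing with $N(0)=1$ and $N(1)=0$, hence a bijection of $[0,1]$, so $N^{-1}$ is well-defined and is itself a fuzzy negation. With that in hand the computation $I_{S,N}(N^{-1}(x),y) = S(N(N^{-1}(x)),y) = S(x,y) = S(y,x) = I_{S,N}(N^{-1}(y),x)$ finishes \descref{(L-CP)}$(N^{-1})$. I do not expect any genuine obstacle here; the whole argument is a bookkeeping exercise whose only subtle ingredient is recognizing in (iv) that strictness is exactly what makes $N^{-1}$ a well-defined function on all of $[0,1]$.
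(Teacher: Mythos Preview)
Your proof is correct. Note that the paper itself does not supply a proof of Proposition~\ref{prop2.2}; it is stated as a known result with citations to \cite{Baczynski2007,Baczynski2008}, and your direct verification from the $t$-conorm and negation axioms is precisely the standard argument one finds in those references.
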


\begin{defi}\label{def2.14}(\cite{Bedregal2007,Pinheiro2017,Pinheiro2018})
A function $I:[0,1]^{2}\rightarrow[0,1]$ is called an ($T$,$N$)-implication if there exist a $t$-norm $T$ and a fuzzy negation $N$ such that
\begin{equation}\label{eq3}
I(x,y)=N(T(x,N(y)))
\end{equation}
for all $x,y\in[0,1]$. We will write $I_{T,N}$ to denote an ($T$,$N$)-implication.
\end{defi}

\begin{prop}\label{prop2.3}(\cite{Bedregal2007,Pinheiro2017,Pinheiro2018})
$I_{T,N}\in\mathcal{FI}$.
\end{prop}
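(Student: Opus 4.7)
The plan is to verify the five defining conditions \ref{itmI1}--\ref{itmI5} of Definition \ref{def2.9} directly from the formula $I_{T,N}(x,y)=N(T(x,N(y)))$, exploiting only monotonicity and the boundary values of $T$ and $N$.

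For the two monotonicity conditions \ref{itmI1} and \ref{itmI2}, I would observe that the map $(x,y)\mapsto T(x,N(y))$ is increasing in $x$ (by \ref{itmT3}) and decreasing in $y$ (by composing \ref{itmN2} with \ref{itmT3}), and then post-composing with the non-increasing $N$ flips both monotonicities. Concretely, for \ref{itmI1} take $x\leq y$, apply \ref{itmT3} to get $T(x,N(z))\leq T(y,N(z))$, then apply \ref{itmN2} to conclude $I_{T,N}(x,z)\geq I_{T,N}(y,z)$; for \ref{itmI2} take $y\leq z$, use \ref{itmN2} to get $N(z)\leq N(y)$, then \ref{itmT3} gives $T(x,N(z))\leq T(x,N(y))$, and \ref{itmN2} again yields $I_{T,N}(x,y)\leq I_{T,N}(x,z)$.

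For the three corner conditions I would just plug in values and use the boundary behaviour of $N$ (\ref{itmN1}) together with the corresponding boundary facts for $T$ recorded in Remark \ref{rmk2.1}: $I_{T,N}(0,0)=N(T(0,N(0)))=N(T(0,1))=N(0)=1$ using \ref{itmT5}; $I_{T,N}(1,1)=N(T(1,N(1)))=N(T(1,0))=N(0)=1$ using \ref{itmT5} and \ref{itmN1}; $I_{T,N}(1,0)=N(T(1,N(0)))=N(T(1,1))=N(1)=0$ using \ref{itmT6} (or \ref{itmT4}) and \ref{itmN1}.

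There is essentially no obstacle here: the statement is a routine consequence of the fact that $T$ is monotone with the required boundary values and that $N$ reverses order with $N(0)=1$, $N(1)=0$. The only mild care needed is to keep track of how the outer $N$ reverses the inequalities obtained inside, so that \ref{itmI1} and \ref{itmI2} come out with the correct direction; nothing requires continuity, strictness, involutivity, or any additional property of $T$ or $N$ beyond what is demanded in Definitions \ref{def2.1} and \ref{def2.5}.
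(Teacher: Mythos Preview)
Your argument is correct. The paper does not actually supply a proof of Proposition \ref{prop2.3}; it is merely quoted from \cite{Bedregal2007,Pinheiro2017,Pinheiro2018}. That said, your verification follows exactly the pattern the paper uses in its proof of the analogous Proposition \ref{prop3.1} for $I_{S,N,T}$: check \ref{itmI3}--\ref{itmI5} by direct substitution using \ref{itmN1}, \ref{itmT4}, \ref{itmT5}, and derive \ref{itmI1}, \ref{itmI2} by chaining \ref{itmT3} and \ref{itmN2}. Nothing is missing.
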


\begin{prop}\label{prop2.4}(\cite{Bedregal2007,Pinheiro2017,Pinheiro2018})
Let $N$ be a strong fuzzy negation and let $T$ be a t-norm. Then, for all $x,y\in[0,1]$,
$$T(x,y)=N(I_{T,N}(x,N(y))).$$
\end{prop}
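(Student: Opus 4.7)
The plan is a direct unfolding of the definition of $I_{T,N}$ combined with two applications of the involutive property \ref{itmN5} of the strong negation $N$. Specifically, I will rewrite $I_{T,N}(x,N(y))$ using Definition \ref{def2.14}, then collapse the inner $N(N(y))$ to $y$, and finally collapse the outer $N(N(\cdot))$ to obtain $T(x,y)$.

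In more detail, the first step is to substitute $N(y)$ in place of $y$ in the defining equation \eqref{eq3}, giving
\[
I_{T,N}(x,N(y)) \;=\; N\bigl(T(x,N(N(y)))\bigr).
\]
Since $N$ is strong, property \ref{itmN5} yields $N(N(y))=y$, so this simplifies to $I_{T,N}(x,N(y))=N(T(x,y))$. Applying $N$ to both sides and invoking \ref{itmN5} once more gives $N(I_{T,N}(x,N(y)))=N(N(T(x,y)))=T(x,y)$, which is the desired identity.

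There is no real obstacle here: the argument is purely a manipulation of the definition, and the only hypothesis actively used is that $N$ satisfies $N\circ N=\mathrm{id}_{[0,1]}$. The $t$-norm axioms of Definition \ref{def2.1} are not needed for the equality itself (they only ensure that $I_{T,N}$ is well defined as a fuzzy implication in the sense of Proposition \ref{prop2.3}). Hence the proof reduces to a one-line computation, and the main care is simply to apply the involution on the correct argument at each step.
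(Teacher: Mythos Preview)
Your argument is correct: unfolding Definition~\ref{def2.14} at $N(y)$ and applying \ref{itmN5} twice is exactly the computation that establishes the identity. Note that the paper does not actually supply a proof of Proposition~\ref{prop2.4}; it is quoted from \cite{Bedregal2007,Pinheiro2017,Pinheiro2018}, and your derivation is the standard one found in those sources.
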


For more results on ($S$,$N$) and ($T$,$N$)-implications see \cite{Baczynski2007,Baczynski2008a,Pinheiro2017,Pinheiro2018}.

\section{($S$,$N$,$T$)-Implications}

In this section we will introduce a new class of fuzzy implications obtained from the composition of a $t$-conorm, a fuzzy negation and a $t$-norm.

\begin{defi}\label{def3.1}
A function $I:[0,1]^{2}\rightarrow[0,1]$ is called an \emph{($S$,$N$,$T$)-implication} if there exist a $t$-conorm $S$, a fuzzy negation $N$ and a $t$-norm $T$ such that
\begin{equation}\label{eq4}
    I(x,y) = S(N(T(x,N(y))),N(x))
\end{equation}
for all $x,y\in[0,1]$. We will write $I_{S,N,T}$ to denote an \emph{($S$,$N$,$T$)-implication}.
\end{defi}

\begin{prop}\label{prop3.1}
$I_{S,N,T}\in\mathcal{FI}$.
\end{prop}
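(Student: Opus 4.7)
The plan is to verify directly each of the five conditions \ref{itmI1}--\ref{itmI5} from Definition \ref{def2.9}, applied to the formula $I_{S,N,T}(x,y)=S(N(T(x,N(y))),N(x))$. Both monotonicity conditions reduce to chaining the known monotonicities of $S$, $T$ and $N$, and the three boundary conditions reduce to plugging in $0$ and $1$ and applying \ref{itmT5}, \ref{itmT6}, \ref{itmS4}, \ref{itmS5} together with \ref{itmN1}.

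First I would dispatch \ref{itmI1}: assuming $x\leq y$, non-increasingness \ref{itmN2} gives $N(x)\geq N(y)$; monotonicity \ref{itmT3} then gives $T(x,N(z))\leq T(y,N(z))$, so \ref{itmN2} once more gives $N(T(x,N(z)))\geq N(T(y,N(z)))$; since also $N(x)\geq N(y)$, monotonicity \ref{itmS3} of $S$ closes the case. Condition \ref{itmI2} is entirely analogous, only that the second argument of $S$ is now constant in the varying variable: $y\leq z$ gives $N(y)\geq N(z)$, hence $T(x,N(y))\geq T(x,N(z))$, hence $N(T(x,N(y)))\leq N(T(x,N(z)))$, and \ref{itmS3} finishes.

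For the boundary conditions I would simply compute:
\begin{align*}
I_{S,N,T}(0,0) &= S(N(T(0,N(0))),N(0)) = S(N(0),1) = S(1,1) = 1,\\
I_{S,N,T}(1,1) &= S(N(T(1,N(1))),N(1)) = S(N(T(1,0)),0) = S(N(0),0) = S(1,0) = 1,\\
I_{S,N,T}(1,0) &= S(N(T(1,N(0))),N(1)) = S(N(T(1,1)),0) = S(N(1),0) = S(0,0) = 0,
\end{align*}
using \ref{itmT5}, \ref{itmT6} (which give $T(0,1)=0$, $T(1,0)=0$, $T(1,1)=1$), together with \ref{itmN1} and the boundary values of $S$ from \ref{itmS4} and \ref{itmS5}. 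No step here presents any real obstacle; the proof is entirely routine chaining of the defining properties, and I do not anticipate needing any auxiliary result beyond Definitions \ref{def2.1}, \ref{def2.3} and \ref{def2.5} plus Remarks \ref{rmk2.1} and \ref{rmk2.2}.
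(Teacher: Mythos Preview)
Your proposal is correct and follows essentially the same approach as the paper: direct verification of \ref{itmI1}--\ref{itmI5} by chaining \ref{itmN2}, \ref{itmT3}, \ref{itmS3} for the monotonicities and plugging in boundary values for \ref{itmI3}--\ref{itmI5}. The only difference is cosmetic ordering (the paper checks the boundary conditions first), and in a couple of places you cite \ref{itmT5}/\ref{itmT6} where the paper cites \ref{itmT4}, but these are interchangeable here.
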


\begin{proof}
It follows that:\vspace{0.2cm}
\\
\textbf{(i)} $I_{S,N,T}$ satisfies \ref{itmI3}, \ref{itmI4} and \ref{itmI5}. In fact,
\begin{eqnarray*}
I_{S,N,T}(0,0) &=& S(N(T(0,1)),1)\overset{\ref{itmT4}}{=}S(N(0),1)=S(1,1)\overset{\ref{itmS5}}{=}1\\
I_{S,N,T}(1,1) &=& S(N(T(1,0)),0)\overset{\ref{itmT5}}{=}S(N(0),0)=S(1,0)\overset{\ref{itmS4}}{=}1 \\
I_{S,N,T}(1,0) &=& S(N(T(1,1)),0)\overset{\ref{itmT4}}{=}S(N(1),0)=S(0,0)\overset{\ref{itmS4}}{=}0
\end{eqnarray*}
\textbf{(ii)} $I_{S,N,T}$ satisfies \ref{itmI1}. In fact, let $x,y,z\in[0,1]$ such that $x\leq y$. Hence,
\begin{eqnarray*}
x\leq y &\overset{\ref{itmT3},\ref{itmN2}}{\implies}&
N(T(y,N(z)))\leq N(T(x,N(z)))\\
&\overset{\ref{itmN2},\ref{itmS3}}{\implies}& S(N(T(y,N(z))),N(y))\leq S(N(T(x,N(z))),N(x))\\
&\overset{\eqref{eq4}}{\implies}& I_{S,N,T}(y,z)\leq I_{S,N,T}(x,z).
\end{eqnarray*}
\textbf{(iii)} $I_{S,N,T}$ satisfies \ref{itmI2}. In fact, let $x,y,z\in[0,1]$ such that $y\leq z$. Hence,
\begin{eqnarray*}
y\leq z &\overset{\ref{itmN2},\ref{itmT3}}{\implies}&
N(T(x,N(y)))\leq N(T(x,N(z)))\\
&\overset{\ref{itmS3}}{\implies}& S(N(T(x,N(y))),N(x))\leq S(N(T(x,N(z))),N(x))\\
&\overset{\eqref{eq4}}{\implies}& I_{S,N,T}(x,y)\leq I_{S,N,T}(x,z).
\end{eqnarray*}
Therefore, $I_{S,N,T}$ is a fuzzy implication.
\end{proof}

\begin{prop}\label{prop3.2}
Let $I_{S,N,T}$ be an ($S$,$N$,$T$)-implication. Then it holds that:
\begin{enumerate}[label={\textbf{(\roman*)}}, ref=\textbf{(\roman*)}, align=left, leftmargin=*, noitemsep]
\item $I_{S,N,T}$ satisfies \textbf{\descref{(NP)}} if and only if $N$ is strong.
\item If $N$ is strong then $I_{S,N,T}$ satisfies \textbf{\descref{(CB)}}.
\item If $N$ is strong then $I_{S,N,T}$ satisfies \textbf{\descref{(SIB)}}.
\item If $S=S_{max}$ or $N$ is crisp then $N_{I_{S,N,T}}=N$.
\item If $N$ is crisp then $I_{S,N,T}$ satisfies \textbf{\descref{(IP)}}.
\item If $N$ is crisp then $I_{S,N,T}$ satisfies \textbf{\descref{(LOP)}}.
\end{enumerate}
\end{prop}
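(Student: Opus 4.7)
The plan is to dispatch each of the six items by direct computation from the definition $I_{S,N,T}(x,y) = S(N(T(x,N(y))),N(x))$, invoking the boundary conditions collected in Remark~\ref{rmk2.1} and Remark~\ref{rmk2.2}. For (i), I would simplify $I_{S,N,T}(1,y)$ using \ref{itmT4} and \ref{itmS4} to obtain $N(N(y))$, whereupon \descref{(NP)} becomes precisely the involution property \ref{itmN5} defining a strong negation; this yields both directions of (i) at once.

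For (ii), assuming $N$ strong, I would chain $I_{S,N,T}(x,y) \geq N(T(x,N(y))) \geq N(N(y)) = y$ using \ref{itmS7}, \ref{itmT8}, \ref{itmN2}, and \ref{itmN5}. Item (iii) then drops out immediately: once \descref{(CB)} is in hand, monotonicity \ref{itmI2} of the implication in its second argument lifts $I_{S,N,T}(x,y) \geq y$ to $I_{S,N,T}(x,I_{S,N,T}(x,y)) \geq I_{S,N,T}(x,y)$, which is \descref{(SIB)}. For (iv), I would reduce $N_{I_{S,N,T}}(x) = I_{S,N,T}(x,0)$ via \ref{itmT4} to $S(N(x),N(x))$ and observe that this equals $N(x)$ either trivially when $S = S_{max}$ by idempotence of the maximum, or by a two-case split using $N(x) \in \{0,1\}$ when $N$ is crisp (Remark~\ref{rmk2.10}).

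For (v) and (vi), I would run a case analysis on the values of $N(x)$ and $N(y)$, which lie in $\{0,1\}$ by crispness. In (v), both cases $N(x)=0$ and $N(x)=1$ collapse $T(x,N(x))$ to $0$ or $x$ via \ref{itmT5}/\ref{itmT4}, so $N(T(x,N(x)))=1$ and \ref{itmS5} yields $I_{S,N,T}(x,x)=1$. For (vi), assuming $x \leq y$: if $N(y)=0$ then $T(x,N(y))=0$ and $N(0)=1$, so \ref{itmS5} gives the value $1$; if $N(y)=1$, then \ref{itmN2} applied to $x \leq y$ forces $N(x)=1$ as well, and \ref{itmS5} again closes the argument. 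The arguments are essentially routine boundary chases; the only mild subtlety is in (vi), where using the non-increasingness of $N$ to lift $N(y)=1$ to $N(x)=1$ is what allows the second case to close, so no deeper obstruction is expected.
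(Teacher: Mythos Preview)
Your proposal is correct and follows essentially the same boundary-chasing strategy as the paper, item by item. The only noteworthy variation is in (ii): the paper invokes part (i) to get \descref{(NP)} and then applies \ref{itmI9} to conclude $I_{S,N,T}(x,y)\geq I_{S,N,T}(1,y)=y$, whereas you argue directly via $S(N(T(x,N(y))),N(x)) \geq N(T(x,N(y))) \geq N(N(y)) = y$; both routes are equally short, and in (vi) you split on $N(y)$ rather than on $N(x)$ as the paper does, which is an entirely symmetric choice.
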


\begin{proof}
It follows that:\vspace{0.2cm}
\\
\textbf{(i)}
\vspace{-0.7cm}
\begin{equation*}
\begin{split}
I_{S,N,T} \ \text{satisfies \textbf{\descref{(NP)}}} &\iff I_{S,N,T}(1,y)=y, \forall y\in[0,1]\\
&\overset{\ref{itmT6}}{\iff} S(N(N(y)),0)=y, \forall y\in[0,1]\\
&\overset{\ref{itmS4}}{\iff}
N \ \text{is strong}.
\end{split}
\end{equation*}
\textbf{(ii)} By \textbf{(i)}, one has that $I_{S,N,T}$ satisfies \textbf{\descref{(NP)}}, i.e. $I_{S,N,T}(1,y)=y$ for all $y\in[0,1]$. Hence, it follows that $I_{S,N,T}(x,y)\overset{\ref{itmI9}}{\geq} I_{S,N,T}(1,y)=y$ for all $x,y\in[0,1]$.
\vspace{0.3cm}
\\
\textbf{(iii)} By \textbf{(ii)}, one has that $y\leq I_{S,N,T}(x,y), \forall x,y\in[0,1]$. Hence, it follows that, $I_{S,N,T}(x,y)\overset{\ref{itmI2}}{\leq} I_{S,N,T}(x,I_{S,N,T}(x,y)), \forall x,y\in[0,1]$.
\vspace{0.3cm}
\\
\textbf{(iv)} For all $x\in[0,1]$, one has that
\begin{equation}\label{eq5}
N_{I_{S,N,T}}(x)=I_{S,N,T}(x,0)=S(N(T(x,1)),N(x))\overset{\ref{itmT4}}{=}S(N(x),N(x)).
\end{equation}
If $S=S_{max}$ then by \eqref{eq5} we get $N_{I_{S,N,T}}(x)=N(x)$ for all $x\in[0,1]$. Now, let $x\in[0,1]$. Since $N$ is crisp then $N(x)=0$ or $N(x)=1$. If $N(x)=0$ then by \eqref{eq5} we get $N_{I_{S,N,T}}(x)=S(N(x),N(x))=S(0,0)\overset{\ref{itmS4}}{=}0=N(x)$. If $N(x)=1$ then by \eqref{eq5} we get $N_{I_{S,N,T}}(x)=S(N(x),N(x))=S(1,1)\overset{\ref{itmS5}}{=}1=N(x)$.\\
\vspace{-0.1cm}
\\
\textbf{(v)} For all $x\in[0,1]$ one has that $I_{S,N,T}(x,x)=S(N(T(x,N(x))),N(x))$. Hence, if $N(x)=0$ then $I_{S,N,T}(x,x)=S(N(T(x,0)),0)\overset{\ref{itmT5}}{=}S(N(0),0)=S(1,0)\overset{\ref{itmS4}}{=}1$ and if $N(x)=1$ then $I_{S,N,T}(x,x)=S(N(T(x,1)),1)\overset{\ref{itmS5}}{=}1$. Therefore, $I_{S,N,T}$ satisfies \textbf{\descref{(IP)}}.
\vspace{0.3cm}
\\
\textbf{(vi)} Let $x,y\in[0,1]$ such that $x\leq y$. We have two possible cases:
\vspace{0.2cm}
\\
$N(x)=1$: In this case, it is obvious by \ref{itmS5} that $I_{S,N,T}(x,y)=1$.
\vspace{0.2cm}
\\
$N(x)=0$: By \ref{itmN2} one has that $N(y)\leq N(x)=0$, i.e., $N(y)=0$. Hence, it follows that $I_{S,N,T}(x,y)=S(N(T(x,0)),0)\overset{\ref{itmT5}}{=}S(N(0),0)=S(1,0)\overset{\ref{itmS4}}{=}1.$
Therefore, $I_{S,N,T}$ satisfies \textbf{\descref{(LOP)}}.
\end{proof}

The following result characterizes the ($S$,$N$,$T$)-implications when $N$ is a crisp fuzzy negation.

\begin{prop}
Let $T$ be a $t$-norm, $S$ be a $t$-conorm and $N$ be a crisp fuzzy negation. Then, it holds that:
\begin{enumerate}[label={\textbf{(\roman*)}}, ref=\textbf{(\roman*)}, align=left, leftmargin=*, noitemsep]
\item $I_{S,N,T}$ satisfies \textbf{\descref{(EP)}};
\item $I_{S,N,T}$ does not satisfy \textbf{\descref{(NP)}};
\item $I_{S,N,T}$ satisfies \textbf{\descref{(IP)}};
\item $I_{S,N,T}$ satisfies \textbf{\descref{(LOP)}};
\item $I_{S,N,T}$ does not satisfy \textbf{\descref{(ROP)}};
\item $I_{S,N,T}$ does not satisfy \textbf{\descref{(OP)}}.
\end{enumerate}
\end{prop}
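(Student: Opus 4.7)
The plan is to exploit one key simplification: since $N$ is crisp, $N(y)\in\{0,1\}$ for every $y$, so by \ref{itmT5} and \ref{itmT4} the term $T(x,N(y))$ equals either $0$ or $x$. Hence $N(T(x,N(y)))$ lies in $\{0,1\}$, and unfolding \eqref{eq4} gives the closed form
\[
I_{S,N,T}(x,y)=\begin{cases}1 & \text{if } N(y)=0 \text{ or } N(x)=1,\\ 0 & \text{otherwise,}\end{cases}
\]
where the subcase $N(y)=1$ collapses $S(N(x),N(x))$ to $0$ or $1$ via \ref{itmS4} and \ref{itmS5}. This formula is the engine that drives all six items.

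For \textbf{(i)}, I would note that since $I_{S,N,T}(y,z)$ is already in $\{0,1\}$, applying the closed form twice yields $I_{S,N,T}(x,I_{S,N,T}(y,z))=1$ iff $N(x)=1$ or $N(y)=1$ or $N(z)=0$; this predicate is symmetric in $x$ and $y$, so \descref{(EP)} follows at once. For \textbf{(ii)}, I would invoke Proposition~\ref{prop3.2}\,\textbf{(i)} and observe that a crisp negation cannot be strong: $N\circ N$ takes values only in $\{N(0),N(1)\}=\{1,0\}$ by \ref{itmN1}, so $N(N(x))=x$ fails for every $x\in(0,1)$. Items \textbf{(iii)} and \textbf{(iv)} are then immediate from Proposition~\ref{prop3.2}\,\textbf{(v)} and \textbf{(vi)}.

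For \textbf{(v)} and \textbf{(vi)} I would exhibit an explicit counterexample to \descref{(ROP)}, which automatically breaks \descref{(OP)} (reading the latter as the conjunction of \descref{(LOP)} and \descref{(ROP)}). By Remark~\ref{rmk2.10}, $N$ equals either $N_\alpha$ for some $\alpha\in[0,1)$ or $N^\alpha$ for some $\alpha\in(0,1]$; in either subfamily the set $\{t\in[0,1]:N(t)=0\}$ (respectively $\{t\in[0,1]:N(t)=1\}$) contains a nondegenerate interval, so I can pick $x>y$ inside it and obtain $I_{S,N,T}(x,y)=1$ directly from the closed form.

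The main obstacle, such as it is, is deriving the closed form cleanly: one must carefully track which of the boundary identities \ref{itmT4}, \ref{itmT5}, \ref{itmS4}, \ref{itmS5} applies in each of the four subcases determined by $N(x),N(y)\in\{0,1\}$. Once this bookkeeping is done, every remaining step is essentially Boolean logic on $\{0,1\}$ together with a quotation of an earlier result.
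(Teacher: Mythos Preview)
Your argument is correct, and it is organized quite differently from the paper's. The paper fixes $N=N_\alpha$ (announcing at the end that $N=N^\alpha$ is analogous) and attacks each item by a direct, rather lengthy case analysis on the position of $x,y,z$ relative to $\alpha$; in particular, for \descref{(EP)} it expands both $I_{S,N_\alpha,T}(x,I_{S,N_\alpha,T}(y,z))$ and $I_{S,N_\alpha,T}(y,I_{S,N_\alpha,T}(x,z))$ in each of several subcases. You instead derive the two-valued closed form $I_{S,N,T}(x,y)=1\iff N(x)=1\text{ or }N(y)=0$ once and for all, which reduces \descref{(EP)} to the observation that the predicate ``$N(x)=1$ or $N(y)=1$ or $N(z)=0$'' is symmetric in $x$ and $y$. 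This is cleaner and shorter; amusingly, the paper establishes essentially your closed form only \emph{after} this proposition, as two separate results for $N_\alpha$ and $N^\alpha$. For \textbf{(iii)} and \textbf{(iv)} you simply cite Proposition~\ref{prop3.2}\,(v)--(vi), whereas the paper re-proves them by hand; both are fine. Your counterexample for \descref{(ROP)} via a nondegenerate level set of $N$ is also correct (indeed it sidesteps a small slip in the paper's own counterexample, where the claim ``$N_\alpha(x)=1$'' for $x>y\ge\alpha$ is misstated). Overall: same conclusions, but your closed-form route buys a uniform argument over all crisp $N$ and a much shorter treatment of \descref{(EP)}, at the cost of requiring the reader to trust one Boolean simplification up front rather than seeing each case spelled out.
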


\begin{proof}
Consider $N=N_{\alpha}$, for some $\alpha\in[0,1[$.
\vspace{0.2cm}
\\
\textbf{(\emph{i})} Let $x,y,z\in[0,1]$. We can divide the discussion into three possible cases.
\vspace{0.2cm}
\\
\textbf{Case 1}: $x\leq\alpha$.
\begin{eqnarray*}
I_{S,N_{\alpha},T}(x,I_{S,N_{\alpha},T}(y,z)) &=& S(N_{\alpha}(T(x,N_{\alpha}(S(N_{\alpha}(T(y,N_{\alpha}(z))),N_{\alpha}(y))))),N_{\alpha}(x)) \\
&=& S(N_{\alpha}(T(x,N_{\alpha}(S(N_{\alpha}(T(y,N_{\alpha}(z))),N_{\alpha}(y))))),1) \\
&\overset{\ref{itmS5}}{=}& 1
\end{eqnarray*}
and
\begin{eqnarray*}
I_{S,N_{\alpha},T}(y,I_{S,N_{\alpha},T}(x,z)) &=& S(N_{\alpha}(T(y,N_{\alpha}(S(N_{\alpha}(T(x,N_{\alpha}(z))),N_{\alpha}(x))))),N_{\alpha}(y)) \\
&=& S(N_{\alpha}(T(x,N_{\alpha}(S(N_{\alpha}(T(y,N_{\alpha}(z))),1)))),N_{\alpha}(y)) \\
&\overset{\ref{itmS5}}{=}& S(N_{\alpha}(T(x,N_{\alpha}(1))),N_{\alpha}(y)) \\
&=& S(N_{\alpha}(T(x,0)),N_{\alpha}(y)) \\
&\overset{\ref{itmT5}}{=}& S(N_{\alpha}(0),N_{\alpha}(y)) \\
&=& S(1,N_{\alpha}(y)) \\
&\overset{\ref{itmS5}}{=}& 1
\end{eqnarray*}
\textbf{Case 2}: $x>\alpha$ and $y\leq\alpha$.
\begin{eqnarray*}
I_{S,N_{\alpha},T}(x,I_{S,N_{\alpha},T}(y,z)) &=& S(N_{\alpha}(T(x,N_{\alpha}(S(N_{\alpha}(T(y,N_{\alpha}(z))),N_{\alpha}(y))))),N_{\alpha}(x)) \\
&=& S(N_{\alpha}(T(x,N_{\alpha}(S(N_{\alpha}(T(y,N_{\alpha}(z))),1)))),0) \\
&\overset{\ref{itmS5}}{=}& S(N_{\alpha}(T(x,N_{\alpha}(1))),0) \\
&=& S(N_{\alpha}(T(x,0)),0) \\
&\overset{\ref{itmT5}}{=}& S(N_{\alpha}(0),0) \\
&=& S(1,0) \\
&\overset{\ref{itmS4}}{=}& 1
\end{eqnarray*}
and
\begin{eqnarray*}
I_{S,N_{\alpha},T}(y,I_{S,N_{\alpha},T}(x,z)) &=& S(N_{\alpha}(T(y,N_{\alpha}(S(N_{\alpha}(T(x,N_{\alpha}(z))),N_{\alpha}(x))))),N_{\alpha}(y)) \\
&=& S(N_{\alpha}(T(y,N_{\alpha}(S(N_{\alpha}(T(x,N_{\alpha}(z))),0)))),1) \\
&\overset{\ref{itmS5}}{=}& 1
\end{eqnarray*}
\textbf{Case 3}: $x>\alpha$ and $y>\alpha$.

\textbf{Subcase 3.1}: $z\leq\alpha$.
\begin{eqnarray*}
I_{S,N_{\alpha},T}(x,I_{S,N_{\alpha},T}(y,z)) &=& S(N_{\alpha}(T(x,N_{\alpha}(S(N_{\alpha}(T(y,N_{\alpha}(z))),N_{\alpha}(y))))),N_{\alpha}(x)) \\
&=& S(N_{\alpha}(T(x,N_{\alpha}(S(N_{\alpha}(T(y,1)),0)))),0) \\
&\overset{\ref{itmT4}}{=}& S(N_{\alpha}(T(x,N_{\alpha}(S(N_{\alpha}(y),0)))),0) \\
&=& S(N_{\alpha}(T(x,N_{\alpha}(S(0,0)))),0) \\
&\overset{\ref{itmS4}}{=}& S(N_{\alpha}(T(x,N_{\alpha}(0))),0) \\
&=& S(N_{\alpha}(T(x,1)),0) \\
&\overset{\ref{itmT4}}{=}& S(N_{\alpha}(x),0) \\
&=& S(0,0) \\
&\overset{\ref{itmS4}}{=}& 0
\end{eqnarray*}

and
\begin{eqnarray*}
I_{S,N_{\alpha},T}(y,I_{S,N_{\alpha},T}(x,z)) &=& S(N_{\alpha}(T(y,N_{\alpha}(S(N_{\alpha}(T(x,N_{\alpha}(z))),N_{\alpha}(x))))),N_{\alpha}(y)) \\
&=& S(N_{\alpha}(T(y,N_{\alpha}(S(N_{\alpha}(T(x,1)),0)))),0) \\
&\overset{\ref{itmT4}}{=}& S(N_{\alpha}(T(y,N_{\alpha}(S(N_{\alpha}(x),0)))),0) \\
&=& S(N_{\alpha}(T(y,N_{\alpha}(S(0,0)))),0) \\
&\overset{\ref{itmS4}}{=}& S(N_{\alpha}(T(y,N_{\alpha}(0))),0) \\
&=& S(N_{\alpha}(T(y,1)),0) \\
&\overset{\ref{itmT4}}{=}& S(N_{\alpha}(y),0) \\
&=& S(0,0) \\
&\overset{\ref{itmS4}}{=}& 0
\end{eqnarray*}

\textbf{Subcase 3.2}: $z>\alpha$.
\begin{eqnarray*}
I_{S,N_{\alpha},T}(x,I_{S,N_{\alpha},T}(y,z)) &=& S(N_{\alpha}(T(x,N_{\alpha}(S(N_{\alpha}(T(y,N_{\alpha}(z))),N_{\alpha}(y))))),N_{\alpha}(x)) \\
&=& S(N_{\alpha}(T(x,N_{\alpha}(S(N_{\alpha}(T(y,0)),0)))),0) \\
&\overset{\ref{itmT5}}{=}& S(N_{\alpha}(T(x,N_{\alpha}(S(N_{\alpha}(0),0)))),0) \\
&=& S(N_{\alpha}(T(x,N_{\alpha}(S(1,0)))),0) \\
&\overset{\ref{itmS4}}{=}& S(N_{\alpha}(T(x,N_{\alpha}(1))),0) \\
&=& S(N_{\alpha}(T(x,0)),0) \\
&\overset{\ref{itmT5}}{=}& S(N_{\alpha}(0),0) \\
&=& S(1,0) \\
&\overset{\ref{itmS4}}{=}& 1
\end{eqnarray*}

and
\begin{eqnarray*}
I_{S,N_{\alpha},T}(y,I_{S,N_{\alpha},T}(x,z)) &=& S(N_{\alpha}(T(y,N_{\alpha}(S(N_{\alpha}(T(x,N_{\alpha}(z))),N_{\alpha}(x))))),N_{\alpha}(y)) \\
&=& S(N_{\alpha}(T(y,N_{\alpha}(S(N_{\alpha}(T(x,0)),0)))),0) \\
&\overset{\ref{itmT5}}{=}& S(N_{\alpha}(T(y,N_{\alpha}(S(N_{\alpha}(0),0)))),0) \\
&=& S(N_{\alpha}(T(y,N_{\alpha}(S(1,0)))),0) \\
&\overset{\ref{itmS4}}{=}& S(N_{\alpha}(T(y,N_{\alpha}(1))),0) \\
&=& S(N_{\alpha}(T(y,0)),0) \\
&\overset{\ref{itmT5}}{=}& S(N_{\alpha}(0),0) \\
&=& S(1,0) \\
&\overset{\ref{itmS4}}{=}& 1
\end{eqnarray*}
In any case, one has that $I_{S,N_{\alpha},T}(x,I_{S,N_{\alpha},T}(y,z))=I_{S,N_{\alpha},T}(y,I_{S,N_{\alpha},T}(x,z))$ for all $x,y,z\in[0,1]$. Therefore, $I_{S,N_{\alpha},T}$ satisfies \textbf{\descref{(EP)}}.
\vspace{0.2cm}
\\
\textbf{(\emph{ii})} As $N$ is crisp then $N$ is not strong. Hence, by Proposition \ref{prop3.2}(\emph{i}), we conclude that $I_{S,N,T}$ does not satisfy \textbf{\descref{(NP)}}.
\vspace{0.2cm}
\\
\textbf{(\emph{iii})} Let $x\in[0,1]$. If $x\leq\alpha$ then, by \ref{itmS5}, $I_{S,N_{\alpha},T}(x,x)=1$. If $x>\alpha$ then
\begin{eqnarray*}
I_{S,N_{\alpha},T}(x,x) &=& S(N_{\alpha}(T(x,N_{\alpha}(x))),N_{\alpha}(x))=S(N_{\alpha}(T(x,0)),0) \\
&\overset{\ref{itmT5}}{=}& S(N_{\alpha}(0),0)=S(1,0)\overset{\ref{itmS4}}{=}1.
\end{eqnarray*}
Therefore, $I_{S,N_{\alpha},T}$ satisfies \textbf{\descref{(IP)}}.
\vspace{0.2cm}
\\
\textbf{(\emph{iv})} Let $x,y\in[0,1]$ such that $x\leq y$. If $x=y$ then, by (\emph{\textbf{vi}}), one has that $I_{S,N_{\alpha},T}(x,y)=1$. Now, suppose that $x<y$. There are three possible cases to consider.
\vspace{0.2cm}
\\
$\alpha<x<y$: In this case,
$$I_{S,N_{\alpha},T}(x,y)=S(N_{\alpha}(T(x,N_{\alpha}(y))),N_{\alpha}(x))=S(N_{\alpha}(T(x,0)),0)=1.$$
$x\leq\alpha<y$: In this case, we obtain immediately by \ref{itmS5} that $I_{S,N_{\alpha},T}(x,y)=1$.
\vspace{0.2cm}
\\
$x<y\leq\alpha$: In this case, we obtain immediately by \ref{itmS5} that $I_{S,N_{\alpha},T}(x,y)=1$.
\vspace{0.2cm}
\\
Therefore, $I_{S,N_{\alpha},T}$ satisfies \textbf{\descref{(LOP)}}.
\vspace{0.2cm}
\\
\textbf{(\emph{v})} Let $x,y\in[0,1]$ such that $x>y\geq\alpha$. So, $N_{\alpha}(x)=1$. Hence, it follows immediately by \ref{itmS5} that $I_{S,N_{\alpha},T}(x,y)=1$. Therefore, $I_{S,N_{\alpha},T}$ does not satisfy \textbf{\descref{(ROP)}}.
\vspace{0.2cm}
\\
\textbf{(\emph{vi})} The proof follows immediately from item (\emph{\textbf{v}}).
\vspace{0.2cm}
\\
Analogously we can prove each of the above items for $N=N^{\alpha}$.
\end{proof}

The next result establishes some relations between ($S$,$N$), ($T$,$N$) and ($S$,$N$,$T$)-implications for a given triple ($T$,$S$,$N$).

\begin{prop}\label{prop3.3}
Let $S$ be a $t$-conorm, $N$ a fuzzy negation and $T$ a $t$-norm. Then it holds that:
\begin{enumerate}[label={\textbf{(\roman*)}}, ref=\textbf{(\roman*)}, align=left, leftmargin=*, noitemsep]
\item $I_{S,N,T}\geq I_{T,N}$;
\item If $N$ is strong then $I_{S,N,T}(x,y)=I_{S,N}(x,I_{T,N}(x,y))$ for all $x,y\in[0,1]$;
\item If ($T$,$S$,$N$) is a De Morgan triple then \vspace{0.1cm}
\\
\hspace{-3.5cm}$(a)$ $I_{S,N,T}(x,y)=I_{S,N}(x,I_{S,N}(x,y))$ for any $x,y\in[0,1]$;\vspace{0.1cm}
\\
\hspace{-3.5cm}$(b)$ $I_{S,N,T}(x,y)=I_{T,N}(x,I_{T,N}(x,y))$ for any $x,y\in[0,1]$.
\end{enumerate}
\end{prop}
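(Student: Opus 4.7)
The plan is to dispatch (i) and (ii) by direct unfolding of the definitions plus an elementary property of $S$, and to handle (iii) by exploiting the De Morgan duality to reduce the inner expression $N(T(x,N(y)))$ to a simpler shape.

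For (i), note that the defining formula $I_{S,N,T}(x,y) = S(N(T(x,N(y))), N(x))$ has exactly the form $S(I_{T,N}(x,y), N(x))$. Property \ref{itmS7} (namely $S(a,b) \geq a$) then yields $I_{S,N,T}(x,y) \geq I_{T,N}(x,y)$ immediately. For (ii), I would expand $I_{S,N}(x, I_{T,N}(x,y)) = S(N(x), N(T(x,N(y))))$ and apply commutativity \ref{itmS1} of $S$ to land on $I_{S,N,T}(x,y)$. Observe in passing that this calculation does not actually invoke the strongness of $N$; I would simply carry the hypothesis along to respect the statement.

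For (iii), Theorem \ref{thm2.1} applied to the De Morgan triple $(T,S,N)$ tells me that $N$ is strong and $S$ is the $N$-dual of $T$. This delivers the key identity $N(T(x,N(y))) = S(N(x), N(N(y))) = S(N(x), y) = I_{S,N}(x,y)$, valid for every $x,y \in [0,1]$. Substituting into the definition of $I_{S,N,T}$ yields $I_{S,N,T}(x,y) = S(I_{S,N}(x,y), N(x)) = S(N(x), I_{S,N}(x,y)) = I_{S,N}(x, I_{S,N}(x,y))$ by commutativity of $S$, which is (a). For (b), the cleanest route is the observation that the same key identity, combined with strongness of $N$, also forces $I_{T,N}(x,y) = N(T(x,N(y))) = I_{S,N}(x,y)$ for all $x,y$; hence (b) is just (a) with $I_{S,N}$ replaced throughout by $I_{T,N}$, and no further computation is needed. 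Alternatively, one may unfold $I_{T,N}(x, I_{T,N}(x,y))$ directly, cancel the double $N$ via strongness, use associativity of $T$, and finish with the two De Morgan dualities.

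The main obstacle, such as it is, lies entirely in spotting the collapse $N(T(x,N(\cdot))) = I_{S,N}(x,\cdot)$ enforced by De Morgan duality; once that rewriting is in hand, both halves of (iii) reduce to shuffling the arguments of $S$, and the coincidence of $I_{S,N}$ with $I_{T,N}$ under a De Morgan triple makes (a) and (b) algebraically the same statement.
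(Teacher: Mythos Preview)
Your argument is correct throughout. Parts (i) and (iii)(a) are essentially identical to the paper's proof: the paper also writes $I_{S,N,T}(x,y)=S(I_{T,N}(x,y),N(x))$ and invokes \ref{itmS7} for (i), and for (iii)(a) it likewise passes through Theorem~\ref{thm2.1} and the De~Morgan duality to rewrite $N(T(x,N(y)))$ as $S(N(x),y)$.

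Where you diverge is in (ii) and (iii)(b). For (ii), the paper takes a longer route through Proposition~\ref{prop2.4} (to get $T(x,N(y))=N(I_{T,N}(x,y))$) and then through \textbf{\descref{(CP)}}($N$) for $I_{S,N}$ (Proposition~\ref{prop2.2}(v)); both of these genuinely consume the strongness hypothesis. Your direct expansion $I_{S,N}(x,I_{T,N}(x,y))=S(N(x),N(T(x,N(y))))$ followed by \ref{itmS1} is shorter and, as you rightly note, shows that strongness of $N$ is in fact unnecessary for (ii). For (iii)(b), the paper only says ``analogous,'' presumably meaning a parallel computation; your observation that the De~Morgan identity forces $I_{T,N}=I_{S,N}$ outright, so that (b) is literally (a), is a cleaner way to close the argument.
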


\begin{proof}
\textbf{(i)} $I_{S,N,T}(x,y)=S(I_{T,N}(x,y),N(x))\overset{\ref{itmS7}}{\geq}I_{T,N}(x,y)$ for all $x,y\in[0,1]$.
\textbf{(ii)} Since $N$ is strong then $T(x,N(y))=N(I_{T,N}(x,y)), \forall x,y\in[0,1]$ (Proposition \ref{prop2.4}) and $I_{S,N}$ satisfies \textbf{\descref{(CP)}}($N$) (Proposition \ref{prop2.2}\textbf{$(v)$}), i.e. $I_{S,N}(x,y)=I_{S,N}(N(y),N(x))$ for all $x,y\in[0,1]$. Hence, it follows that
\begin{equation*}
\begin{split}
I_{S,N,T}(x,y) &= S(N(T(x,N(y))),N(x)) =I_{S,N}(T(x,N(y)),N(x)) \\
&= I_{S,N}(N(I_{T,N}(x,y)),N(x))= I_{S,N}(x,I_{T,N}(x,y)).
\end{split}
\end{equation*}
\textbf{(iii)}$(a)$ From Theorem \ref{thm2.1}, $N$ is strong and $S$ is $N$-dual to $T$. Thereby,
\begin{eqnarray*}
I_{S,N,T}(x,y) &=& S(N(T(x,N(y))),N(x))\\
&\overset{\ref{itmN5}}{=}& S(N(T(N(N(x)),N(y))),N(x))\\
&\overset{\ref{S-(N-D)-T}}{=}& S(N(N(S(N(x),y)),N(x))\\
&\overset{\ref{itmN5},\ref{itmS1}}{=}& S(N(x),S(N(x),y))
= I_{S,N}(x,I_{S,N}(x,y)).
\end{eqnarray*}
\textbf{(iii)}$(b)$ The proof is analogous.
\end{proof}

\begin{prop}
Let $\alpha\in[0,1[$. Then, $I_{S,N_{\alpha},T}$ is the fuzzy implication defined by
$$I_{S,N_{\alpha},T}(x,y) = \begin{cases} 1 &\mbox{if } x\leq\alpha \ \text{or} \ y>\alpha \\
0 & \mbox{otherwise. } \end{cases}$$
\end{prop}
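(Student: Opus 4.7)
The plan is to compute $I_{S,N_{\alpha},T}(x,y) = S(N_{\alpha}(T(x,N_{\alpha}(y))),N_{\alpha}(x))$ directly by doing a case analysis on whether $x$ and $y$ lie above or below the threshold $\alpha$, exploiting the fact that $N_{\alpha}$ takes only the values $0$ and $1$. Concretely, I would split into the four boolean combinations given by the conditions ``$x\leq\alpha$'' and ``$y\leq\alpha$'' and simplify the inner expressions using the boundary properties \ref{itmT4}, \ref{itmT5}, \ref{itmS4}, \ref{itmS5} and the definition of $N_{\alpha}$.

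First I would handle the easy case $x\leq\alpha$: here $N_{\alpha}(x)=1$, so independently of $y$ the outer $t$-conorm gives $S(\cdot,1)=1$ by \ref{itmS5}. This immediately covers the first branch of the claimed formula. Next I would suppose $x>\alpha$, so that $N_{\alpha}(x)=0$ and, by \ref{itmS4}, $I_{S,N_{\alpha},T}(x,y)=N_{\alpha}(T(x,N_{\alpha}(y)))$. Within this case I would split on $y$: if $y>\alpha$ then $N_{\alpha}(y)=0$, so \ref{itmT5} gives $T(x,0)=0$ and hence $N_{\alpha}(0)=1$; while if $y\leq\alpha$ then $N_{\alpha}(y)=1$, so \ref{itmT4} gives $T(x,1)=x$ and $N_{\alpha}(x)=0$ since $x>\alpha$. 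Together these three subcases yield exactly the piecewise formula in the statement.

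There is essentially no obstacle, since $N_{\alpha}$ forces all of the nested expressions to collapse via the boundary axioms of $T$ and $S$; the only thing to be careful about is keeping track of the strict inequality $x>\alpha$ (needed so that $N_{\alpha}(x)=0$) versus the weak inequality $x\leq\alpha$ (needed so that $N_{\alpha}(x)=1$), which exactly matches the definition of $N_{\alpha}$ recalled in Remark \ref{rmk2.10}. An analogous case analysis, done for $N^{\alpha}$ instead of $N_{\alpha}$, would cover the dual formulation if one wished to extend the statement to crisp negations of the second type; but as stated the result concerns only $N_{\alpha}$ with $\alpha\in[0,1[$, and the three-case reduction above suffices.
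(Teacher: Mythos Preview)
Your proposal is correct and follows essentially the same approach as the paper, namely a direct case analysis on the values of $N_{\alpha}(x)$ and $N_{\alpha}(y)$ combined with the boundary identities \ref{itmT4}, \ref{itmT5}, \ref{itmS4}, \ref{itmS5}. Your three-subcase split is in fact slightly more careful than the paper's two-case split (which treats ``$x\leq\alpha$ or $y>\alpha$'' in a single line), but the argument is the same.
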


\begin{proof}
Just calculate the value of $I_{S,N_{\alpha},T}$ in the following cases:
\vspace{0.2cm}
\\
$x\leq\alpha \ \text{or} \ y>\alpha$: From the definition of $I_{S,N_{\alpha},T}$ and $N_{\alpha}$, we get that
\begin{eqnarray*}
I_{S,N_{\alpha},T}(x,y) &=& S(N_{\alpha}(T(x,N_{\alpha}(y))),N_{\alpha}(x)) = S(N_{\alpha}(T(x,0)),1) \\
&\overset{\ref{itmT5}}{=}& S(N_{\alpha}(0),1)=S(1,1)\overset{\ref{itmS5}}{=}1
\end{eqnarray*}
$x>\alpha \ \text{and} \ y\leq\alpha$: From the definition of $I_{S,N_{\alpha},T}$ and $N_{\alpha}$, we get that
\begin{eqnarray*}
I_{S,N_{\alpha},T}(x,y) &=& S(N_{\alpha}(T(x,N_{\alpha}(y))),N_{\alpha}(x)) = S(N_{\alpha}(T(x,1)),0) \\
&\overset{\ref{itmT4}}{=}& S(N_{\alpha}(x),0) = S(0,0)\overset{\ref{itmS4}}{=}0,
\end{eqnarray*}
which concludes the proof of the result.
\end{proof}

Analogously we can prove the following

\begin{prop}
Let $\alpha\in]0,1]$. Then, $I_{S,N^{\alpha},T}$ is the fuzzy implication defined by
$$I_{S,N^{\alpha},T}(x,y) = \begin{cases} 1 &\mbox{if } x<\alpha \ \text{or} \ y\geq\alpha \\
0 & \mbox{otherwise. } \end{cases}$$
\end{prop}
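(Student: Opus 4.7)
The plan is to mirror the proof of the previous proposition, splitting cases based on the two possible values of $N^{\alpha}$. Since $\alpha \in \,]0,1]$, the crisp negation $N^{\alpha}$ takes value $0$ precisely when its argument is $\geq \alpha$ and value $1$ precisely when its argument is $< \alpha$. In particular, because $0 < \alpha$, one has $N^{\alpha}(0) = 1$; this fact is what makes the ``or'' branch collapse to $1$.

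First I would handle the branch where $I_{S,N^{\alpha},T}(x,y) = 1$ by treating its two disjuncts separately. If $x < \alpha$, then $N^{\alpha}(x) = 1$, so the second argument of the outer $S$ is $1$ and \ref{itmS5} gives the value $1$ directly. If instead $y \geq \alpha$, then $N^{\alpha}(y) = 0$, so $T(x, N^{\alpha}(y)) = T(x,0) = 0$ by \ref{itmT5}, whence $N^{\alpha}(T(x,N^{\alpha}(y))) = N^{\alpha}(0) = 1$ (here using $0 < \alpha$), and again \ref{itmS5} yields $1$.

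Next I would handle the remaining case $x \geq \alpha$ and $y < \alpha$. Then $N^{\alpha}(x) = 0$ and $N^{\alpha}(y) = 1$, so by \ref{itmT4} we get $T(x, N^{\alpha}(y)) = T(x,1) = x$. Since $x \geq \alpha$, $N^{\alpha}(x) = 0$, so the expression becomes $S(0, 0) = 0$ by \ref{itmS4}.

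There is no genuine obstacle here; the argument is purely case-analytic and uses only the boundary identities \ref{itmT4}, \ref{itmT5}, \ref{itmS4}, \ref{itmS5} together with the definition of $N^{\alpha}$. The only point worth flagging is that the hypothesis $\alpha > 0$ is essential, since it guarantees $N^{\alpha}(0) = 1$ in the subcase $y \geq \alpha$; this is the counterpart of the hypothesis $\alpha < 1$ in the preceding proposition for $N_{\alpha}$.
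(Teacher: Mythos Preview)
Your proposal is correct and is exactly the analogous argument the paper intends (the paper omits the proof, writing only ``Analogously we can prove the following''). Your explicit split of the disjunctive case into the two subcases $x<\alpha$ and $y\geq\alpha$ is in fact slightly more careful than the paper's own write-up of the $N_{\alpha}$ case, and your remark that $\alpha>0$ is needed for $N^{\alpha}(0)=1$ is spot on.
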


The next result presents a characterization of ($S$,$N$,$T$)-implications through the law of contradiction.

\begin{teo}\label{thm3.1}
Let $T$ be a $t$-norm, $S$ a positive $t$-conorm and $N$ a non-filling fuzzy negation. Then, $I_{S,N,T}$ satisfies \textbf{\descref{(IP)}} if and only if ($T$,$N$) satisfies \eqref{LC}.
\end{teo}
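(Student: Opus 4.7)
The plan is to unfold the definition of the $(S,N,T)$-implication at the diagonal and then translate the assertion that this expression equals $1$ into the statement $T(x,N(x))=0$ (which, by commutativity, is \eqref{LC}). Concretely, the starting identity is
\[
I_{S,N,T}(x,x)=S(N(T(x,N(x))),N(x)),
\]
so the whole proof amounts to analysing when this value equals $1$.

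For the direction \textbf{\descref{(LC)}} $\Rightarrow$ \textbf{\descref{(IP)}}, I would simply chain the hypotheses: \ref{itmT1} together with \eqref{LC} gives $T(x,N(x))=T(N(x),x)=0$, then \ref{itmN1} yields $N(T(x,N(x)))=N(0)=1$, and finally \ref{itmS5} produces $S(1,N(x))=1$. So $I_{S,N,T}(x,x)=1$ for every $x\in[0,1]$, which is \textbf{\descref{(IP)}}. This direction uses nothing about positivity of $S$ or non-fillingness of $N$ and should fit in a couple of lines.

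For the harder converse \textbf{\descref{(IP)}} $\Rightarrow$ \textbf{\descref{(LC)}}, I would fix $x\in[0,1]$ and split on whether $x=0$ or $x>0$. The case $x=0$ is automatic: $T(N(0),0)=T(1,0)=0$ by \ref{itmT5}, so \eqref{LC} holds at $0$ for free. For $x>0$, non-fillingness \ref{itmN6} of $N$ forces $N(x)\neq 1$, so from $S(N(T(x,N(x))),N(x))=1$ and the positivity of $S$ (which rules out the second argument equalling $1$) I conclude $N(T(x,N(x)))=1$. Applying \ref{itmN6} once more gives $T(x,N(x))=0$, and commutativity \ref{itmT1} rewrites this as $T(N(x),x)=0$, yielding \eqref{LC}.

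The only delicate point is to make sure the two asymmetric hypotheses on $S$ and $N$ are each used exactly where they are needed: positivity of $S$ is what lets us peel off $N(x)$ from the disjunctive-style $S$-expression, and non-fillingness of $N$ is used twice, first to guarantee $N(x)\neq 1$ (allowing positivity to act), and then to invert $N(\,\cdot\,)=1$ back into the argument being $0$. I do not foresee any serious obstacle beyond this bookkeeping and the mild case split at $x=0$.
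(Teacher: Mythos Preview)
Your proof is correct and follows essentially the same approach as the paper: both arguments hinge on expanding $I_{S,N,T}(x,x)=S(N(T(x,N(x))),N(x))$, using non-fillingness of $N$ twice (to ensure $N(x)\neq 1$ and to invert $N(\cdot)=1$) and positivity of $S$ once. The only cosmetic difference is that the paper proves the converse by contraposition rather than directly with a case split at $x=0$.
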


\begin{proof}
($\Leftarrow$) We have that ($T$,$N$) satisfies \eqref{LC}. 
So, it follows that
\begin{eqnarray*}
I_{S,N,T}(x,x) &=& S(N(T(x,N(x))),N(x)) \overset{\ref{itmT1}}{=} S(N(T(N(x),x)),N(x)) \\
&\overset{\eqref{LC}}{=}& S(N(0),N(x)) = S(1,N(x)) \overset{\ref{itmS5}}{=} 1.
\end{eqnarray*}
($\Rightarrow$) Suppose that ($T$,$N$) does not satisfy \eqref{LC}. Then there exists $x_{0}\in[0,1]$ such that $T(N(x_{0}),x_{0})\neq0$. From \ref{itmT5} one has that $x_{0}\neq0$ and $N(x_{0})\neq0$. As $N$ is non-filling (\ref{itmN6}) then $N(x_{0})<1$ and $N(T(N(x_{0}),x_{0}))<1$. Hence, as $S$ is positive then it follows that $I_{S,N,T}(x_{0},x_{0})=S(N(T(x_{0},N(x_{0}))),N(x_{0}))<1$,
i.e., $I_{S,N,T}$ does not satisfy \textbf{\descref{(IP)}}.
\end{proof}

\begin{coro}\label{cor3.1}
Let $T$ be a $t$-norm, $S$ a strict $t$-conorm and $N$ a non-filling fuzzy negation. Then, $I_{S,N,T}$ satisfies \textbf{\descref{(IP)}} if and only if ($T$,$N$) satisfies \eqref{LC}.
\end{coro}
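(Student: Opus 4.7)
The plan is to deduce Corollary \ref{cor3.1} directly from Theorem \ref{thm3.1} by showing that any strict $t$-conorm is automatically positive. Once this inclusion is established, the hypotheses of Theorem \ref{thm3.1} are satisfied, and the biconditional ``$I_{S,N,T}$ satisfies \textbf{\descref{(IP)}} iff $(T,N)$ satisfies \eqref{LC}'' is inherited with no further work.

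To verify the implication ``strict $\Rightarrow$ positive'' for a $t$-conorm $S$, I would argue by contraposition. Suppose $S(x,y)=1$ but $x<1$ and $y<1$. Since $y<1$, strict monotonicity of $S$ (in the sense of Definition \ref{def2.4}(iii): $S(x,y)<S(x,z)$ whenever $x<1$ and $y<z$) applied with $z=1$ yields
\begin{equation*}
S(x,y) < S(x,1) \overset{\ref{itmS5}}{=} 1,
\end{equation*}
which contradicts $S(x,y)=1$. Hence either $x=1$ or $y=1$, so $S$ is positive in the sense of Definition \ref{def2.4}(iv).

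The remainder of the argument is a one-line invocation: since $T$ is a $t$-norm, $S$ is a strict (hence positive) $t$-conorm, and $N$ is non-filling, all the hypotheses of Theorem \ref{thm3.1} are met, and the equivalence between \textbf{\descref{(IP)}} for $I_{S,N,T}$ and \eqref{LC} for $(T,N)$ follows at once. There is no real obstacle here; the only substantive observation is the routine lemma ``strict $t$-conorm implies positive $t$-conorm,'' which is folklore but worth recording explicitly in the proof so that the reader sees why the corollary is strictly weaker than the theorem above it.
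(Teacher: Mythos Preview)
Your proposal is correct and follows exactly the paper's approach: the paper's proof is the one-liner ``It follows directly from the fact that all strict $t$-conorm is positive and from Theorem \ref{thm3.1}.'' You have merely supplied the (correct) verification of the folklore implication ``strict $\Rightarrow$ positive,'' which the paper leaves unproved.
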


\begin{proof}
It follows directly from the fact that all strict $t$-conorm is positive and from Theorem \ref{thm3.1}.
\end{proof}

\begin{prop}\label{prop3.4}
Let $S$ be a $t$-conorm, $T$ a $t$-norm and $N$ a strong fuzzy negation. If $T$ is $N$-dual to $S$ then $I_{S,N,T}$ satisfies \textbf{\descref{(EP)}}.
\end{prop}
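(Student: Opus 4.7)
The plan is to exploit the duality $N(T(a,b))=S(N(a),N(b))$ together with the involutivity $N\circ N=\mathrm{id}$ to rewrite $I_{S,N,T}$ as an expression built only from $S$ and $N$, and then verify \textbf{\descref{(EP)}} by brute associativity/commutativity of the $t$-conorm.

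First I would simplify the inner block $N(T(x,N(y)))$. Because $T$ is $N$-dual to $S$, we have $N(T(x,N(y)))=S(N(x),N(N(y)))$, and since $N$ is strong this reduces to $S(N(x),y)$. Substituting back,
\begin{equation*}
I_{S,N,T}(x,y)=S\bigl(S(N(x),y),N(x)\bigr).
\end{equation*}
Using \ref{itmS1} and \ref{itmS2} this can be rewritten as $S(S(N(x),N(x)),y)$. Writing $u(x):=S(N(x),N(x))$ for brevity, we obtain the clean normal form
\begin{equation*}
I_{S,N,T}(x,y)=S(u(x),y).
\end{equation*}

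Second, I would plug this form into both sides of \textbf{\descref{(EP)}}. On the one hand,
\begin{equation*}
I_{S,N,T}(x,I_{S,N,T}(y,z))=S(u(x),S(u(y),z))\overset{\ref{itmS2}}{=}S(S(u(x),u(y)),z),
\end{equation*}
and on the other hand
\begin{equation*}
I_{S,N,T}(y,I_{S,N,T}(x,z))=S(u(y),S(u(x),z))\overset{\ref{itmS2}}{=}S(S(u(y),u(x)),z).
\end{equation*}
By \ref{itmS1}, $S(u(x),u(y))=S(u(y),u(x))$, so the two expressions coincide for every $x,y,z\in[0,1]$, which is exactly \textbf{\descref{(EP)}}.

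There is no real obstacle here: the whole proof rests on spotting that the duality hypothesis collapses the nested $N\circ T$ into a single $S$, after which associativity and commutativity of $S$ do all the remaining work. The only thing to be careful about is using strongness (not merely strictness) of $N$, since the reduction $N(N(y))=y$ is applied in the first step.
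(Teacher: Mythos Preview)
Your proof is correct. It uses the same ingredients as the paper's proof---the duality \eqref{T-(N-D)-S}, the involutivity \ref{itmN5}, and the associativity/commutativity of the aggregation---but organizes them differently, and more economically.

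The paper keeps both sides of \textbf{\descref{(EP)}} fully nested and pushes everything into the $t$-norm world, arriving at the symmetric expression
\[
N\bigl(T(T(T(x,x),T(y,y)),N(z))\bigr)
\]
for each side. You instead first simplify $I_{S,N,T}$ itself: applying \eqref{T-(N-D)-S} and \ref{itmN5} once yields the normal form $I_{S,N,T}(x,y)=S(u(x),y)$ with $u(x)=S(N(x),N(x))$, after which \textbf{\descref{(EP)}} is immediate from \ref{itmS1}--\ref{itmS2}. This is shorter and also exposes a structural fact the paper only records separately (cf.\ Proposition~\ref{prop3.3}\textbf{(iii)}$(a)$): under the hypotheses, $I_{S,N,T}$ is nothing but the $(S,N')$-implication $I_{S,u}$ for the fuzzy negation $u$, whence \textbf{\descref{(EP)}} is really just the known \textbf{\descref{(EP)}} for $(S,N)$-implications (Proposition~\ref{prop2.2}\textbf{(i)}). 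Your remark that strongness, not merely strictness, of $N$ is what is used is accurate and worth keeping.
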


\begin{proof}
It follows that:
\begin{eqnarray}\label{eq6}
I_{S,N,T}(x,I_{S,N,T}(y,z)) &=& S(N(T(x,N(S(N(T(y,N(z))),N(y))))),N(x)) \nonumber \\ &\overset{\ref{T-(N-D)-S}}{=}& S(N(T(x,N(N(T(T(y,N(z)),y))))),N(x)) \nonumber \\ 
&\underset{\ref{itmT1},\ref{itmT2}}{\overset{\ref{itmN5}}{=}}& S(N(T(x,T(T(y,y),N(z)))),N(x)) \nonumber \\
&\underset{\ref{T-(N-D)-S}}{=}& N(T(T(x,T(T(y,y),N(z))),x)) \nonumber \\
&\underset{\ref{itmT1},\ref{itmT2}}{=}& 
N(T(T(T(x,x),T(y,y)),N(z)))
\end{eqnarray}
and
\begin{eqnarray}\label{eq7}
I_{S,N,T}(y,I_{S,N,T}(x,z)) &=& S(N(T(y,N(S(N(T(x,N(z))),N(x))))),N(y)) \nonumber \\ &\overset{\ref{T-(N-D)-S}}{=}& S(N(T(y,N(N(T(T(x,N(z)),x))))),N(y)) \nonumber \\ &\underset{\ref{itmT1},\ref{itmT2}}{\overset{\ref{itmN5}}{=}}& S(N(T(y,T(T(x,x),N(z)))),N(y)) \nonumber \\
&\underset{\ref{T-(N-D)-S}}{=}& N(T(T(y,T(T(x,x),N(z))),y)) \nonumber \\
&\underset{\ref{itmT1},\ref{itmT2}}{=}&
N(T(T(T(x,x),T(y,y)),N(z)))
\end{eqnarray}
\vspace{-0.3cm}
\\
Thus, by \eqref{eq6} and \eqref{eq7} we get $I_{S,N,T}(x,I_{S,N,T}(y,z))=I_{S,N,T}(y,I_{S,N,T}(x,z))$ for all $x,y,z\in[0,1]$. Therefore, $I_{S,N,T}$ satisfies \textbf{\descref{(EP)}}.
\end{proof}

\begin{lem}\label{lem3.1}
Let $T$ be a $t$-norm, $N$ a strong fuzzy negation and $S$ a $t$-conorm such that $S$ is $N$-dual to $T$. Then, for all $x,y\in[0,1]$,
\begin{equation}\label{eq8}
I_{S,N,T}(N(x),y)=S(S(x,y),x)=S(S(x,x),y).
\end{equation}
\end{lem}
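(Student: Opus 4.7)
The plan is to start from the definition and push the strong-negation/duality structure through directly.

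First I would evaluate the definition of $I_{S,N,T}$ at $(N(x),y)$, obtaining
$$I_{S,N,T}(N(x),y)=S\bigl(N(T(N(x),N(y))),\,N(N(x))\bigr).$$
Using that $N$ is strong (property \ref{itmN5}), the second coordinate collapses to $x$, leaving $S(N(T(N(x),N(y))),x)$. The crucial step is to simplify $N(T(N(x),N(y)))$.

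For this I would invoke the hypothesis that $S$ is $N$-dual to $T$, i.e. \ref{S-(N-D)-T}: $N(S(u,v))=T(N(u),N(v))$. Applying $N$ to both sides and using \ref{itmN5} again yields the equivalent form $S(u,v)=N(T(N(u),N(v)))$. Specializing $u=x$, $v=y$ gives precisely $N(T(N(x),N(y)))=S(x,y)$, and therefore
$$I_{S,N,T}(N(x),y)=S(S(x,y),x),$$
which is the first equality in \eqref{eq8}.

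The second equality, $S(S(x,y),x)=S(S(x,x),y)$, is then a purely algebraic consequence of \ref{itmS1} and \ref{itmS2}: commute the outer $S$ to get $S(x,S(x,y))$, then re-associate as $S(S(x,x),y)$. I do not expect any real obstacle here; the only point requiring care is unpacking the $N$-duality in the correct direction (\ref{S-(N-D)-T} versus \ref{T-(N-D)-S}) and keeping track of the involutive use of \ref{itmN5} when moving $N$ across $T$.
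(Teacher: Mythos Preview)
Your argument is correct and is exactly the computation the paper omits under ``Straightforward'': plug $N(x)$ into \eqref{eq4}, use \ref{itmN5} to reduce $N(N(x))$ to $x$, use $N$-duality together with \ref{itmN5} to turn $N(T(N(x),N(y)))$ into $S(x,y)$, and then apply \ref{itmS1}, \ref{itmS2} for the last equality. There is nothing to add.
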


\begin{proof}
Straightforward.
\end{proof}

\begin{prop}\label{prop3.5}
If $S=S_{max}$ (maximum $t$-conorm), $T=T_{min}$ (minimum $t$-norm) and $N$ a strong fuzzy negation then $I_{S,N,T}$ satisfies \textbf{\descref{(L-CP)}}($N$).
\end{prop}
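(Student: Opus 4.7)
The plan is to reduce everything to Lemma \ref{lem3.1}. First I would verify that the hypotheses of that lemma apply here: $N$ is strong by assumption, and the maximum $t$-conorm is $N$-dual to the minimum $t$-norm for \emph{every} strong negation $N$. Indeed, because $N$ is strictly decreasing (strong implies strong negations are strictly decreasing and continuous), $N(\min(a,b)) = \max(N(a),N(b))$, which is exactly \ref{S-(N-D)-T}; equivalently, $T_{min}$ is $N$-dual to $S_{max}$. So Lemma \ref{lem3.1} is available with $S = S_{max}$, $T = T_{min}$.

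Next I would apply the first equality in \eqref{eq8}. For any $x,y \in [0,1]$,
\begin{equation*}
I_{S_{max},N,T_{min}}(N(x),y) \;=\; S_{max}(S_{max}(x,y),x) \;=\; \max(\max(x,y),x) \;=\; \max(x,y).
\end{equation*}
The right-hand side is symmetric in $x$ and $y$, so swapping the roles of $x$ and $y$ in the same computation gives
\begin{equation*}
I_{S_{max},N,T_{min}}(N(y),x) \;=\; \max(\max(y,x),y) \;=\; \max(x,y).
\end{equation*}
Comparing these two identities yields $I_{S_{max},N,T_{min}}(N(x),y) = I_{S_{max},N,T_{min}}(N(y),x)$, which is precisely \textbf{\descref{(L-CP)}}($N$).

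There is no real obstacle in this proof; the only thing to check carefully is that the De Morgan duality between $S_{max}$ and $T_{min}$ genuinely holds for any strong $N$ (and not only for $N(x)=1-x$), but this is an elementary consequence of $N$ being an order-reversing involution on $[0,1]$. Everything else is a direct invocation of Lemma \ref{lem3.1} together with the symmetry of $\max$.
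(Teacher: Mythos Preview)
Your argument is correct and follows essentially the same route as the paper: both proofs invoke Lemma~\ref{lem3.1} (after noting that $S_{max}$ is $N$-dual to $T_{min}$ for any strong $N$) to reduce $I_{S_{max},N,T_{min}}(N(x),y)$ to $\max(x,y)$, and then use the symmetry of $\max$ to conclude \textbf{\descref{(L-CP)}}($N$). The only cosmetic difference is that the paper simplifies via the form $S_{max}(S_{max}(x,x),y)$ and idempotence, whereas you simplify via $S_{max}(S_{max}(x,y),x)$; your explicit verification of the $N$-duality is, if anything, a bit more careful than the paper's.
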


\begin{proof}
Since $S$ is $N$-dual to $T$ then, from Lemma \ref{lem3.1}, it follows that
\begin{eqnarray*}
I_{S,N,T}(N(x),y) &\overset{\eqref{eq8}}{=}& S_{max}(S_{max}(x,x),y) = S_{max}(x,y) \overset{\ref{itmS1}}{=} S_{max}(y,x) \\
&=& S_{max}(S_{max}(y,y),x) \overset{\eqref{eq8}}{=} I_{S,N,T}(N(y),x),
\end{eqnarray*}
for all $x,y\in[0,1]$. Therefore $I_{S,N,T}$ satisfies \textbf{\descref{(L-CP)}}($N$).
\end{proof}

The following results present methods of how to obtain $t$-norms and $t$-conorms from an ($S$,$N$,$T$)-implication and a fuzzy negation.

\begin{teo}\label{thm3.2}
Let $I_{S,N,T}$ be an ($S$,$N$,$T$)-implication and $N'$ a fuzzy negation. Define the function $S_{I_{S,N,T}}^{N'}:[0,1]^{2}\rightarrow[0,1]$ by
\begin{equation}\label{eq9}
S_{I_{S,N,T}}^{N'}(x,y)=I_{S,N,T}(N'(x),y)
\end{equation}
Then, it holds that:
\begin{enumerate}[label={\textbf{(\roman*)}}, ref=\textbf{(\roman*)}, align=left, leftmargin=*, noitemsep]
\item $S_{I_{S,N,T}}^{N'}(1,x)=S_{I_{S,N,T}}^{N'}(x,1)=1$ for all $x\in[0,1]$;
\item $S_{I_{S,N,T}}^{N'}$ is non-decreasing in both variables, i.e., $S_{I_{S,N,T}}^{N'}$ satisfies \ref{itmS3};
\item If $S=S_{max}$, $T=T_{min}$ and $N=N'$ is strong then $S_{I_{S,N,T}}^{N'}$ satisfies \ref{itmS1};
\item If $S=S_{max}$, $T=T_{min}$ and $N=N'$ is strong then $S_{I_{S,N,T}}^{N'}$ satisfies \ref{itmS4};
\item If $S=S_{max}$, $T=T_{min}$ and $N=N'$ is strong then $S_{I_{S,N,T}}^{N'}$ satisfies \ref{itmS2}.
\end{enumerate}
\end{teo}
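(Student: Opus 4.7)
The plan is to verify the five items in sequence, with (i) and (ii) following from generic properties of fuzzy implications together with the basic axioms of a fuzzy negation, while (iii)--(v) rely on a strong simplification of the defining formula under the specific choice $S=S_{max}$, $T=T_{min}$, $N=N'$ strong.

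For (i), I would compute directly: $S_{I_{S,N,T}}^{N'}(1,x)=I_{S,N,T}(N'(1),x)=I_{S,N,T}(0,x)=1$ by \ref{itmI6} (applicable since $I_{S,N,T}\in\mathcal{FI}$ by Proposition \ref{prop3.1}), and symmetrically $S_{I_{S,N,T}}^{N'}(x,1)=I_{S,N,T}(N'(x),1)=1$ by \ref{itmI7}. For (ii), I would combine the antitonicity of $N'$ (\ref{itmN2}) with the antitonicity of $I_{S,N,T}$ in its first slot (\ref{itmI1}) and its isotonicity in its second (\ref{itmI2}): if $x_{1}\leq x_{2}$ then $N'(x_{2})\leq N'(x_{1})$, and hence $I_{S,N,T}(N'(x_{1}),y)\leq I_{S,N,T}(N'(x_{2}),y)$ (two antitone steps compose to an isotone one); monotonicity in the second slot is inherited unchanged from \ref{itmI2}.

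For (iii), I would invoke Proposition \ref{prop3.5}, which states that under the hypotheses $S=S_{max}$, $T=T_{min}$, $N$ strong, $I_{S,N,T}$ satisfies \textbf{\descref{(L-CP)}}($N$); this is precisely $I_{S,N,T}(N(x),y)=I_{S,N,T}(N(y),x)$, which unfolds immediately to $S_{I_{S,N,T}}^{N}(x,y)=S_{I_{S,N,T}}^{N}(y,x)$, giving \ref{itmS1}.

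The pivotal observation for both (iv) and (v) is that under these hypotheses $S_{I_{S,N,T}}^{N}$ collapses to the maximum $t$-conorm. Indeed, $(T_{min},S_{max},N)$ is a De Morgan triple whenever $N$ is strong (the identity $N(\min\{a,b\})=\max\{N(a),N(b)\}$ follows from the monotonicity of $N$), so $S_{max}$ is $N$-dual to $T_{min}$, and Lemma \ref{lem3.1} yields
\begin{equation*}
S_{I_{S,N,T}}^{N}(x,y)=I_{S,N,T}(N(x),y)=S_{max}(S_{max}(x,x),y)=\max\{x,y\}.
\end{equation*}
From this closed form, (iv) reduces to $\max\{x,0\}=x$ and (v) to the associativity of $\max$. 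The main leverage of the theorem is thus the identification of this collapse via Lemma \ref{lem3.1}; once it is in place, items (iii)--(v) become essentially one-line consequences, and the general-$N'$ part of the theorem is simply a bookkeeping check of the boundary/monotonicity axioms.
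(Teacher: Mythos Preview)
Your proof is correct. Items (i)--(iii) coincide with the paper's argument essentially verbatim: direct computation for (i), composition of antitone maps for (ii), and an appeal to Proposition~\ref{prop3.5} for (iii).

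For (iv) and (v) you take a genuinely different route. The paper does not collapse $S_{I_{S,N,T}}^{N}$ to $S_{max}$; instead it argues structurally: for (iv) it combines the commutativity just proved with the fact that $I_{S,N,T}$ satisfies \textbf{\descref{(NP)}} (Proposition~\ref{prop3.2}(i), since $N$ is strong), giving $S_{I_{S,N,T}}^{N}(x,0)=S_{I_{S,N,T}}^{N}(0,x)=I_{S,N,T}(1,x)=x$; for (v) it invokes \textbf{\descref{(EP)}} via Proposition~\ref{prop3.4} together with \textbf{\descref{(L-CP)}}($N$) to rearrange the nested implication. Your approach, by contrast, uses Lemma~\ref{lem3.1} once to identify $S_{I_{S,N,T}}^{N}(x,y)=S_{max}(S_{max}(x,x),y)=\max\{x,y\}$ and then reads off all the $t$-conorm axioms from the known properties of $\max$. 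This is shorter and in fact proves the stronger statement that $S_{I_{S,N,T}}^{N}=S_{max}$ (which would also give (iii) without Proposition~\ref{prop3.5}); the paper's argument, on the other hand, isolates exactly which abstract properties (\textbf{\descref{(NP)}}, \textbf{\descref{(EP)}}, \textbf{\descref{(L-CP)}}) drive each conclusion, a decomposition that would survive under weaker hypotheses than the specific choice $S=S_{max}$, $T=T_{min}$.
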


\begin{proof}
It follows that:\vspace{0.2cm}
\\
\textbf{(i)} $S_{I_{S,N,T}}^{N'}(1,x)=I_{S,N,T}(N'(1),x)=I_{S,N,T}(0,x)\overset{\ref{itmI6}}{=}1$ and $S_{I_{S,N,T}}^{N'}(x,1)=I_{S,N,T}(N'(x),1)\overset{\ref{itmI7}}{=}1$.
\vspace{0.2cm}
\\
\textbf{(ii)} Let $x\leq z$ and $y\leq w$. By \ref{itmN2}, one has that $N'(x)\geq N'(z)$. Thereby,
\begin{eqnarray*}
S_{I_{S,N,T}}^{N'}(x,y) &=& I_{S,N,T}(N'(x),y) \underset{\ref{itmI2}}{\overset{\ref{itmI1}}{\leq}} I_{S,N,T}(N'(z),w) = S_{I_{S,N,T}}^{N'}(z,w).
\end{eqnarray*}
\textbf{(iii)} By Proposition \ref{prop3.5}, one has that $I_{S,N,T}$ satisfies \textbf{\descref{(L-CP)}}($N'$). Therefore,
$$S_{I_{S,N,T}}^{N'}(x,y)=I_{S,N,T}(N'(x),y)=I_{S,N,T}(N'(y),x)=S_{I_{S,N,T}}^{N'}(y,x).$$
\textbf{(iv)} By \textbf{(iii)}, $S_{I_{S,N,T}}^{N'}$ is commutative, and by Proposition \ref{prop3.2}\textbf{$(i)$}, $I_{S,N,T}$ satisfies \textbf{\descref{(NP)}}. Hence, it follows that $$S_{I_{S,N,T}}^{N'}(x,0)\overset{\textbf{(iii)}}{=}S_{I_{S,N,T}}^{N'}(0,x)=I_{S,N,T}(N'(0),x)=I_{S,N,T}(1,x)\overset{\textbf{\descref{(NP)}}}{=}x.$$
\textbf{(v)} By Proposition \ref{prop3.5}, one has that $I_{S,N,T}$ satisfies \textbf{\descref{(L-CP)}}($N'$) and by \textbf{(iii)}, $S_{I_{S,N,T}}^{N'}$ is commutative. Moreover, as $T$ is $N$-dual to $S$ then by Proposition \ref{prop3.4}, one has that $I_{S,N,T}$ satisfies \textbf{\descref{(EP)}}. Hence, we obtain that
\begin{eqnarray*}
S_{I_{S,N,T}}^{N'}(x,S_{I_{S,N,T}}^{N'}(y,z)) &=& I_{S,N,T}(N'(x),I_{S,N,T}(N'(y),z)) \\
&\overset{\textbf{\descref{(L-CP)}}\text{($N'$)}}{=}& I_{S,N,T}(N'(x),I_{S,N,T}(N'(z),y)) \\ &\overset{\textbf{\descref{(EP)}}}{=}& I_{S,N,T}(N'(z),I_{S,N,T}(N'(x),y)) \\
&\overset{\textbf{(iii)}}{=}& S_{I_{S,N,T}}^{N'}(I_{S,N,T}(N'(x),y),z) \\
&=& S_{I_{S,N,T}}^{N'}(S_{I_{S,N,T}}^{N'}(x,y),z).
\end{eqnarray*}
Therefore, $S_{I_{S,N,T}}^{N'}$ satisfies \ref{itmS2}.
\end{proof}

\begin{coro}\label{cor3.2}
Let $I_{S,N,T}$ be an ($S$,$N$,$T$)-implication and $N'$ be a fuzzy negation. If $S=S_{max}$, $T=T_{min}$ and $N=N'$ is strong then $S_{I_{S,N,T}}^{N'}$ is a t-conorm.
\end{coro}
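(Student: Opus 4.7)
The plan is simply to assemble the four t-conorm axioms from the pieces already proved in Theorem \ref{thm3.2}. Recall that a function $S:[0,1]^{2}\to[0,1]$ is a $t$-conorm exactly when it satisfies \ref{itmS1}--\ref{itmS4}, and all four of these were verified for $S_{I_{S,N,T}}^{N'}$ in Theorem \ref{thm3.2} under hypotheses that are a superset of what is needed here.

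Concretely, I would proceed as follows. First, observe that Theorem \ref{thm3.2}\textbf{(ii)} gives monotonicity \ref{itmS3} with no extra hypothesis, so it applies in particular to our setting with $S=S_{max}$, $T=T_{min}$ and $N=N'$ strong. Next, under exactly these hypotheses, Theorem \ref{thm3.2}\textbf{(iii)}, \textbf{(iv)} and \textbf{(v)} give commutativity \ref{itmS1}, the boundary condition \ref{itmS4}, and associativity \ref{itmS2}, respectively. Together these four facts are precisely Definition \ref{def2.3}, so $S_{I_{S,N,T}}^{N'}$ is a $t$-conorm.

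There is no real obstacle: everything has been done upstream in Theorem \ref{thm3.2}, and the corollary only amounts to recognizing that the conjunction of items \textbf{(ii)}--\textbf{(v)} is exactly the definition of a $t$-conorm. The body of the proof can therefore be written in a single sentence citing Theorem \ref{thm3.2} and Definition \ref{def2.3}.
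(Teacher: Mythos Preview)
Your proposal is correct and matches the paper's approach: the paper states Corollary~\ref{cor3.2} without proof, leaving it as an immediate consequence of Theorem~\ref{thm3.2}, and your argument---collecting items \textbf{(ii)}--\textbf{(v)} to recover \ref{itmS1}--\ref{itmS4} of Definition~\ref{def2.3}---is precisely the intended reading. A minor wording quibble: the hypotheses of items \textbf{(iii)}--\textbf{(v)} in Theorem~\ref{thm3.2} coincide exactly with those of the corollary (rather than being a strict superset), but this does not affect the validity of your proof.
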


\begin{coro}\label{cor3.3}
Let $I_{S,N,T}$ be an ($S$,$N$,$T$)-implication and $N'$ be a fuzzy negation. If $S=S_{max}$, $T=T_{min}$ and $N=N'$ is strong then the function $\widetilde{T}:[0,1]^{2}\rightarrow[0,1]$ defined by $$\widetilde{T}(x,y)=1-I_{S,N,T}(N'(1-x),1-y)$$ is a t-norm.
\end{coro}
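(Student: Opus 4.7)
The plan is to reduce the statement to Corollary \ref{cor3.2} together with the standard duality between $t$-norms and $t$-conorms recorded in Proposition \ref{prop2.1}. First I would unfold the notation of Theorem \ref{thm3.2}: since $S_{I_{S,N,T}}^{N'}(u,v)=I_{S,N,T}(N'(u),v)$, substituting $u=1-x$ and $v=1-y$ rewrites the defining formula as
\[
\widetilde{T}(x,y) \;=\; 1 - S_{I_{S,N,T}}^{N'}(1-x,\,1-y).
\]
This exhibits $\widetilde{T}$ as the dual of $S_{I_{S,N,T}}^{N'}$ with respect to the standard negation $x \mapsto 1-x$.

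Next I would invoke Corollary \ref{cor3.2}: under the standing hypotheses $S=S_{max}$, $T=T_{min}$, and $N=N'$ strong, the function $S_{I_{S,N,T}}^{N'}$ is a $t$-conorm. To finish, observe that from the rewriting above one may solve for $S_{I_{S,N,T}}^{N'}$ by substituting $x\mapsto 1-x$ and $y\mapsto 1-y$, obtaining $S_{I_{S,N,T}}^{N'}(x,y) = 1-\widetilde{T}(1-x,1-y)$. Proposition \ref{prop2.1} then applies in the direction that yields $\widetilde{T}$ is a $t$-norm.

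The main (and essentially only) obstacle is a small logical check: Proposition \ref{prop2.1} is phrased as the equivalence ``$S$ is a $t$-conorm iff there exists a $t$-norm $T$ with $S(x,y)=1-T(1-x,1-y)$.'' We need the direction that starts from the $t$-conorm and concludes the existence of the $t$-norm being precisely the prescribed dual formula, which is exactly what the ``only if'' direction of that proposition states. Applying it with the $t$-conorm $S_{I_{S,N,T}}^{N'}$ forces the unique witness $t$-norm to equal $\widetilde{T}$, thereby concluding the proof. If one preferred to avoid this appeal, a direct verification of \ref{itmT1}--\ref{itmT4} for $\widetilde{T}$ would be straightforward by transferring properties \ref{itmS1}--\ref{itmS4} of $S_{I_{S,N,T}}^{N'}$ through the involution $x\mapsto 1-x$, but this would merely reprove a special case of Proposition \ref{prop2.1}.
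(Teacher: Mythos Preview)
Your proposal is correct and follows essentially the same approach as the paper: the paper's proof simply reads ``The proof follows directly from Corollary \ref{cor3.2} and Proposition \ref{prop2.1},'' and your argument is precisely an unpacking of that sentence, recognizing $\widetilde{T}(x,y)=1-S_{I_{S,N,T}}^{N'}(1-x,1-y)$ and then applying the duality of Proposition \ref{prop2.1} to the $t$-conorm supplied by Corollary \ref{cor3.2}. Your remark about uniqueness of the witnessing $t$-norm is valid but unnecessary to spell out, since the relation $S(x,y)=1-T(1-x,1-y)$ determines $T$ uniquely.
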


\begin{proof}
The proof follows directly from Corollary \ref{cor3.2} and Proposition \ref{prop2.1}.
\end{proof}

Analogously we can prove the following

\begin{teo}\label{thm3.3}
Let $I_{S,N,T}$ be an ($S$,$N$,$T$)-implication and $N'$ a fuzzy negation. Define the function $T_{I_{S,N,T}}^{N'}:[0,1]^{2}\rightarrow[0,1]$ by
\begin{equation}\label{eq10}
T_{I_{S,N,T}}^{N'}(x,y)=N'(I_{S,N,T}(x,N'(y)))
\end{equation}
Then, it holds that:
\begin{enumerate}[label={\textbf{(\roman*)}}, ref=\textbf{(\roman*)}, align=left, leftmargin=*, noitemsep]
\item $T_{I_{S,N,T}}^{N'}(0,x)=T_{I_{S,N,T}}^{N'}(x,0)=0$ for all $x\in[0,1]$;
\item $T_{I_{S,N,T}}^{N'}$ is non-decreasing in both variables, i.e., $T_{I_{S,N,T}}^{N'}$ satisfies \ref{itmT3};
\item If $S=S_{max}$, $T=T_{min}$ and $N=N'$ is strong then $T_{I_{S,N,T}}^{N'}$ satisfies \ref{itmT1};
\item If $S=S_{max}$, $T=T_{min}$ and $N=N'$ is strong then $T_{I_{S,N,T}}^{N'}$ satisfies \ref{itmT4};
\item If $S=S_{max}$, $T=T_{min}$ and $N=N'$ is strong then $T_{I_{S,N,T}}^{N'}$ satisfies \ref{itmT2}.
\end{enumerate}
\end{teo}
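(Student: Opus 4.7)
The plan is to mirror the proof of Theorem \ref{thm3.2}, exploiting the fact that $T_{I_{S,N,T}}^{N'}$ is (up to $N'$-duality) the same construction as $S_{I_{S,N,T}}^{N'}$ but applied on the second argument, so each item has a symmetric counterpart.

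First, for (i), I would do a direct unfolding: $T_{I_{S,N,T}}^{N'}(0,x)=N'(I_{S,N,T}(0,N'(x)))=N'(1)=0$ by \ref{itmI6} and by $N'(1)=0$, and $T_{I_{S,N,T}}^{N'}(x,0)=N'(I_{S,N,T}(x,N'(0)))=N'(I_{S,N,T}(x,1))=N'(1)=0$ by \ref{itmI7}. For (ii), take $x\leq z$ and $y\leq w$; then \ref{itmN2} gives $N'(w)\leq N'(y)$, \ref{itmI1} and \ref{itmI2} give $I_{S,N,T}(z,N'(w))\leq I_{S,N,T}(x,N'(y))$, and applying the non-increasing $N'$ reverses the inequality to $T_{I_{S,N,T}}^{N'}(x,y)\leq T_{I_{S,N,T}}^{N'}(z,w)$.

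For (iii)--(v), I would work under the specialized hypotheses and invoke Proposition \ref{prop3.5}, which yields \textbf{\descref{(L-CP)}}($N'$) for $I_{S,N,T}$, together with Proposition \ref{prop3.4}, which yields \textbf{\descref{(EP)}} because $T_{min}$ is $N$-dual to $S_{max}$ for any strong $N$. Commutativity (iii) amounts to showing $I_{S,N,T}(x,N'(y))=I_{S,N,T}(y,N'(x))$, i.e.\ \textbf{\descref{(R-CP)}}($N'$); this follows by combining \textbf{\descref{(L-CP)}}($N'$) with \textbf{\descref{(EP)}} and the strongness $N'\circ N'=\mathrm{id}$, after which $N'$ of both sides gives the commutativity of $T_{I_{S,N,T}}^{N'}$. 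Item (iv) then follows from
\[
T_{I_{S,N,T}}^{N'}(x,1)=N'(I_{S,N,T}(x,N'(1)))=N'(I_{S,N,T}(x,0))=N'(N_{I_{S,N,T}}(x)),
\]
together with Proposition \ref{prop3.2}(iv) (so $N_{I_{S,N,T}}=N=N'$) and the strongness of $N'$, giving $N'(N'(x))=x$.

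For (v), associativity, I would unfold both sides using strongness to cancel $N'\circ N'$ inside the inner occurrence, and then swap the outer and inner first arguments of $I_{S,N,T}$ via \textbf{\descref{(EP)}}; commutativity (iii) is then used to rewrite the result in the form needed on the other side. The main obstacle I anticipate is verifying (iii): the definition of $T_{I_{S,N,T}}^{N'}$ calls for \textbf{\descref{(R-CP)}}($N'$), whereas Proposition \ref{prop3.5} only delivers \textbf{\descref{(L-CP)}}($N'$); bridging the two requires careful use of \textbf{\descref{(EP)}} and involutivity of $N'$. Once (iii) is in place, the remaining items are routine symmetric computations to those in Theorem \ref{thm3.2}.
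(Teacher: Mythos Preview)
Your proposal is correct and follows exactly the approach the paper intends: the paper's own ``proof'' of Theorem~\ref{thm3.3} is simply the sentence ``Analogously we can prove the following,'' i.e.\ mirror Theorem~\ref{thm3.2}, which is precisely what you do. One minor simplification: in item~(iii) you do not actually need \textbf{\descref{(EP)}} to pass from \textbf{\descref{(L-CP)}}($N'$) to \textbf{\descref{(R-CP)}}($N'$); strongness of $N'$ alone suffices, since $I_{S,N,T}(x,N'(y))=I_{S,N,T}(N'(N'(x)),N'(y))=I_{S,N,T}(N'(N'(y)),N'(x))=I_{S,N,T}(y,N'(x))$.
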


\begin{coro}\label{cor3.4}
Let $I_{S,N,T}$ be an ($S$,$N$,$T$)-implication and $N'$ be a fuzzy negation. If $S=S_{max}$, $T=T_{min}$ and $N=N'$ is strong then $T_{I_{S,N,T}}^{N'}$ is a t-norm.
\end{coro}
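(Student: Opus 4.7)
The plan is to observe that Corollary \ref{cor3.4} is nothing more than a repackaging of Theorem \ref{thm3.3}: under the stronger hypotheses $S=S_{max}$, $T=T_{min}$ and $N=N'$ strong, each of the four defining axioms \ref{itmT1}--\ref{itmT4} of Definition \ref{def2.1} is already separately asserted as one of its items, so I would simply collect them.

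Concretely, I would note that Theorem \ref{thm3.3}\textbf{(iii)} yields commutativity \ref{itmT1}, Theorem \ref{thm3.3}\textbf{(v)} yields associativity \ref{itmT2}, Theorem \ref{thm3.3}\textbf{(ii)} (which in fact needs no extra hypothesis) yields monotonicity \ref{itmT3}, and Theorem \ref{thm3.3}\textbf{(iv)} yields the boundary condition \ref{itmT4}. Combining these four facts, $T_{I_{S,N,T}}^{N'}$ meets Definition \ref{def2.1} and is therefore a $t$-norm. Item (i) of Theorem \ref{thm3.3} would then be redundant for the corollary, since it is the derivable property in Remark \ref{rmk2.1}.

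There is essentially no obstacle in the corollary itself; the real work lives in Theorem \ref{thm3.3}, whose non-trivial items rest on the $N$-duality between $S_{max}$ and $T_{min}$ together with the contraposition principle \textbf{\descref{(L-CP)}}($N$) and the exchange principle \textbf{\descref{(EP)}} that $I_{S,N,T}$ satisfies under a strong negation (cf.\ Propositions \ref{prop3.5} and \ref{prop3.4}, invoked dually). The statement is entirely parallel to Corollary \ref{cor3.2} for $t$-conorms, which is in fact stated in the paper without proof for precisely the same reason, so I would keep the proof to a single sentence.
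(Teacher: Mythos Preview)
Your proposal is correct and matches the paper's approach exactly: Corollary \ref{cor3.4} is stated without proof in the paper precisely because it follows immediately from collecting items \textbf{(ii)}--\textbf{(v)} of Theorem \ref{thm3.3}, just as you describe (and just as Corollary \ref{cor3.2} follows from Theorem \ref{thm3.2}).
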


\begin{coro}\label{cor3.5}
Let $I_{S,N,T}$ be an ($S$,$N$,$T$)-implication and $N'$ be a fuzzy negation. If $S=S_{max}$, $T=T_{min}$ and $N=N'$ is strong then the function $\widetilde{S}:[0,1]^{2}\rightarrow[0,1]$ defined by $$\widetilde{S}(x,y)=1-I_{S,N,T}(1-x,N'(1-y))$$ is a t-conorm.
\end{coro}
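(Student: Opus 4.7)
The plan is to parallel the proof of Corollary~\ref{cor3.3} exactly, but with the roles of t-norm and t-conorm swapped. Under the hypotheses $S = S_{\max}$, $T = T_{\min}$, and $N = N'$ strong, Corollary~\ref{cor3.4} already guarantees that the function
\[
T_{I_{S,N,T}}^{N'}(x,y) \;=\; N'\bigl(I_{S,N,T}(x, N'(y))\bigr)
\]
is a t-norm. Proposition~\ref{prop2.1} then states that for any t-norm $\widehat{T}$ the map $(x,y)\mapsto 1-\widehat{T}(1-x,1-y)$ is a t-conorm. Applying this construction to $\widehat{T} = T_{I_{S,N,T}}^{N'}$ immediately produces a t-conorm, and substituting the explicit expression of $T_{I_{S,N,T}}^{N'}$ into $1-\widehat{T}(1-x,1-y)$ rearranges to the formula for $\widetilde{S}$ displayed in the statement.

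Since both cited results (Corollary~\ref{cor3.4} and Proposition~\ref{prop2.1}) are already fully proved earlier in the excerpt, there is no real conceptual obstacle. The whole argument reduces to composing the two cited statements and performing a direct substitution. The only bookkeeping step is verifying that the composition simplifies to the claimed formula $\widetilde{S}(x,y)=1-I_{S,N,T}(1-x,N'(1-y))$, a routine manipulation that invokes the strongness of $N'$ (i.e., $N'\circ N'=\mathrm{id}$) to collapse the outer negation introduced by the definition of $T_{I_{S,N,T}}^{N'}$. This is essentially the dual of the single-line proof already given for Corollary~\ref{cor3.3}, which is exactly why the author prefaces the statement with ``Analogously we can prove the following''.
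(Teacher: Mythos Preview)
Your overall strategy---apply Corollary~\ref{cor3.4} to obtain the t-norm $T_{I_{S,N,T}}^{N'}$ and then dualize via Proposition~\ref{prop2.1}---is exactly the paper's one-line argument. The gap is in the ``routine manipulation'' you describe at the end. Composing the two cited results gives
\[
1-T_{I_{S,N,T}}^{N'}(1-x,1-y)\;=\;1-N'\bigl(I_{S,N,T}(1-x,\,N'(1-y))\bigr),
\]
which carries an \emph{outer} $N'$ coming from the definition~\eqref{eq10}. Strongness of $N'$ means $N'\circ N'=\mathrm{id}$; it does \emph{not} let you erase a single occurrence of $N'$ or exchange it with the map $u\mapsto 1-u$. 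So strongness alone cannot reduce the displayed expression to the stated formula $\widetilde{S}(x,y)=1-I_{S,N,T}(1-x,N'(1-y))$.

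In fact the stated formula is not a t-conorm in general: already for the standard strong negation $N=N'=N_C$ one computes $I_{S_{\max},N_C,T_{\min}}(1-x,y)=\max(x,y)$, hence $\widetilde{S}(x,y)=1-\max(x,y)$, which gives $\widetilde{S}(0,0)=1$, violating \ref{itmS4}. What Corollary~\ref{cor3.4} together with Proposition~\ref{prop2.1} actually proves is that $1-N'\bigl(I_{S,N,T}(1-x,N'(1-y))\bigr)$ is a t-conorm; the missing $N'$ in the displayed $\widetilde{S}$ appears to be a typographical slip in the statement rather than something your argument (or the paper's) can supply.
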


\begin{proof}
The proof follows directly from Corollary \ref{cor3.4} and Proposition \ref{prop2.1}.
\end{proof}


\section{Conclusions and future work}

In this paper we introduce a new class of fuzzy implications, which we call ($S$,$N$,$T$)-implications. Some of the main properties attributed to the fuzzy implications were investigated for this new class and some characterizations were presented. We show some relations between ($S$,$N$), ($T$,$N$) and ($S$,$N$,$T$)-implications for a given triple ($T$,$S$,$N$). In addition, we present methods on how to obtain $t$-norms and $t$-conorms from an ($S$,$N$,$T$)-implication and a fuzzy negation.

Among the proposals for future research we highlight the following: complement the study of properties that characterize the class of ($S$,$N$,$T$)-implications; provide a characterization of ($S$,$N$,$T$)-implications that satisfy important properties such as $I(x,I(y,z))=I(I(x,y),I(x,z))$ \cite{Cruz2018} and the law of importation; study fuzzy subsethood and entropy measures \cite{Dimuro2017,Pinheiro2018,Santos2019} generated from ($S$,$N$,$T$)-implications and introduce a generalization of this class to triples of the type ($G$,$N$,$O$) where $O$ is an overlap function, $N$ is a fuzzy negation and $G$ is a grouping function; and study their respective properties analogously to done in \cite{Dimuro2014,Dimuro2015,Dimuro2017,Dimuro2019,Dimuro2019a} for other classes of fuzzy implications based on overlaps and grouping functions.


\bibliographystyle{acm.bst}
\bibliography{refs}

\begin{thebibliography}{10}

\bibitem{Baczynski2013}
{\sc Baczy{\'{n}}ski, M., Beliakov, G., Bustince, H., and Pradera, A.}, Eds.
\newblock {\em Advances in Fuzzy Implication Functions}.
\newblock Studies in Fuzziness and Soft Computing, Volume 300, Springer Berlin
  Heidelberg, 2013.

\bibitem{Baczynski2007}
{\sc Baczy{\'{n}}ski, M., and Jayaram, B.}
\newblock On the characterizations of \text{(S,N)}-implications.
\newblock {\em Fuzzy Sets and Systems 158}, 15 (aug 2007), 1713--1727.

\bibitem{Baczynski2008a}
{\sc Baczy{\'{n}}ski, M., and Jayaram, B.}
\newblock Erratum to {\textquotedblleft}on the characterizations of
  \text{(S,N)}-implications{\textquotedblright}.
\newblock {\em Fuzzy Sets and Systems 159}, 14 (jul 2008), 1885.

\bibitem{Baczynski2008}
{\sc Baczy{\'{n}}ski, M., and Jayaram, B.}
\newblock {\em Fuzzy Implications}.
\newblock Studies in Fuzziness and Soft Computing, Volume 231, Springer Berlin
  Heidelberg, 2008.

\bibitem{Bedregal2007}
{\sc Bedregal, B.}
\newblock A normal form which preserves tautologies and contradictions in a
  class of fuzzy logics.
\newblock {\em Journal of Algorithms: Algorithms in Cognition, Informatics and
  Logic 62\/} (2007), 135--147.

\bibitem{Beliakov2007}
{\sc Beliakov, G., Pradera, A., and Calvo, T.}
\newblock {\em Aggregation Functions: A Guide for Practitioners}.
\newblock Studies in Fuzziness and Soft Computing, Volume 221, Springer Berlin
  Heidelberg, 2007.

\bibitem{Cruz2018}
{\sc Cruz, A., Bedregal, B., and Santiago, R.}
\newblock On the characterizations of fuzzy implications satisfying
  \text{I(x,I(y,z))=I(I(x,y),I(x,z))}.
\newblock {\em International Journal of Approximate Reasoning 93\/} (feb 2018),
  261--276.

\bibitem{Dimuro2015}
{\sc Dimuro, G.~P., and Bedregal, B.}
\newblock On residual implications derived from overlap functions.
\newblock {\em Information Sciences 312\/} (aug 2015), 78--88.

\bibitem{Dimuro2017}
{\sc Dimuro, G.~P., Bedregal, B., Bustince, H., Jurio, A., Baczy{\'{n}}ski, M.,
  and Mi{\'{s}}, K.}
\newblock {QL}-operations and {QL}-implication functions constructed from
  tuples \text{(O,G,N)} and the generation of fuzzy subsethood and entropy
  measures.
\newblock {\em International Journal of Approximate Reasoning 82\/} (mar 2017),
  170--192.

\bibitem{Dimuro2019}
{\sc Dimuro, G.~P., Bedregal, B., Fernandez, J., Sesma-Sara, M., Pintor, J.~M.,
  and Bustince, H.}
\newblock The law of \text{O}-conditionality for fuzzy implications constructed
  from overlap and grouping functions.
\newblock {\em International Journal of Approximate Reasoning 105\/} (feb
  2019), 27--48.

\bibitem{Dimuro2014}
{\sc Dimuro, G.~P., Bedregal, B., and Santiago, R. H.~N.}
\newblock On \text{(G,N)}-implications derived from grouping functions.
\newblock {\em Information Sciences 279\/} (sep 2014), 1--17.

\bibitem{Dimuro2019a}
{\sc Dimuro, G.~P., Santos, H., Bedregal, B., Borges, E.~N., Palmeira, E.,
  Fernandez, J., and Bustince, H.}
\newblock On \text{D}-implications derived by grouping functions.
\newblock In {\em 2019 {IEEE} International Conference on Fuzzy Systems
  ({FUZZ}-{IEEE})\/} (jun 2019), {IEEE}.

\bibitem{Klement2000}
{\sc Klement, E.~P., Mesiar, R., and Pap, E.}
\newblock {\em Triangular Norms}.
\newblock Springer Netherlands, 2000.

\bibitem{Mas2006}
{\sc {Mas}, M., {Monserrat}, M., and {Torrens}, J.}
\newblock \text{QL}-implications versus \text{D}-implications.
\newblock {\em Kybernetika 42}, 3 (2006), 351--366.

\bibitem{Pinheiro2017}
{\sc Pinheiro, J., Bedregal, B., Santiago, R. H.~N., and Santos, H.}
\newblock \text{(T,N)}-implications.
\newblock In {\em 2017 {IEEE} International Conference on Fuzzy Systems
  ({FUZZ}-{IEEE})\/} (jul 2017), {IEEE}.

\bibitem{Pinheiro2018}
{\sc Pinheiro, J., Bedregal, B., Santiago, R. H.~N., and Santos, H.}
\newblock A study of \text{(T,N)}-implications and its use to construct a new
  class of fuzzy subsethood measure.
\newblock {\em International Journal of Approximate Reasoning 97\/} (jun 2018),
  1--16.

\bibitem{Santos2019}
{\sc Santos, H., Couso, I., Bedregal, B., Tak{\'{a}}{\v{c}}, Z.,
  Min{\'{a}}rov{\'{a}}, M., Asia{\i}n, A., Barrenechea, E., and Bustince, H.}
\newblock Similarity measures, penalty functions, and fuzzy entropy from new
  fuzzy subsethood measures.
\newblock {\em International Journal of Intelligent Systems 34}, 6 (jan 2019),
  1281--1302.

\end{thebibliography}

\end{document}